\renewcommand{\ket}[1]{| #1 \rangle}
\renewcommand{\bra}[1]{\langle #1 |}
\renewcommand{\eqref}[1]{(\ref{#1})}
\newtheoremstyle{example}{\topsep}{\topsep}%
{}%         Body font
{}%         Indent amount (empty = no indent, \parindent = para indent)
\theoremstyle{example}
\newtheorem{theorem}{Theorem}
\newtheorem{lemma}{Lemma}
\newtheorem{corollary}{Corollary}
\theoremstyle{definition}
\newtheorem{definition}{Definition}
\newtheorem*{theorem*}{Theorem}
\def\orcid#1{\kern -0.4em\href{https://orcid.org/#1}{\includegraphics[keepaspectratio,width=0.7em]{orcid_logo.pdf}}}
\long\def\ca#1\cb{} %Use for commenting out: \ca...\cb
\begin{document}
\title{Measurement reduction for expectation values via fine-grained commutativity}

\author{Ben DalFavero}
\affiliation{Center for Quantum Computing, Science, and Engineering, Michigan State University, East Lansing, MI 48823, USA}

\author{Rahul Sarkar}
\affiliation{Institute for Computational \& Mathematical Engineering, Stanford University, USA}

\author{Jeremiah Rowland}
\affiliation{Center for Quantum Computing, Science, and Engineering, Michigan State University, East Lansing, MI 48823, USA}

\author{Daan Camps}
\affiliation{National Energy Research Scientific Computing Center, Lawrence Berkeley National Laboratory,
Berkeley, California, USA}

\author{Nicolas PD Sawaya}
\affiliation{Azulene Labs, San Francisco, CA 94115}

\author{Ryan LaRose}
\thanks{Corresponding author: \href{mailto:rmlarose@msu.edu}{rmlarose@msu.edu}.}
\affiliation{Center for Quantum Computing, Science, and Engineering, Michigan State University, East Lansing, MI 48823, USA}

\begin{abstract}
    We introduce a notion of commutativity between operators on a tensor product space, nominally Pauli strings on qubits, that interpolates between qubit-wise commutativity and (full) commutativity. We apply this notion, which we call $k$-commutativity, to measuring expectation values of observables in quantum circuits and show a reduction in the number measurements at the cost of increased circuit depth. Last, we discuss the asymptotic measurement complexity of $k$-commutativity for several families of $n$-qubit Hamiltonians, showing examples with $O(1)$, $O(\sqrt{n})$, and $O(n)$ scaling.
\end{abstract}

% =============================================================================
% =============================================================================
\maketitle
% =============================================================================
% =============================================================================

\section{Introduction}

\begin{figure}
    \centering
    \includegraphics[width=\columnwidth]{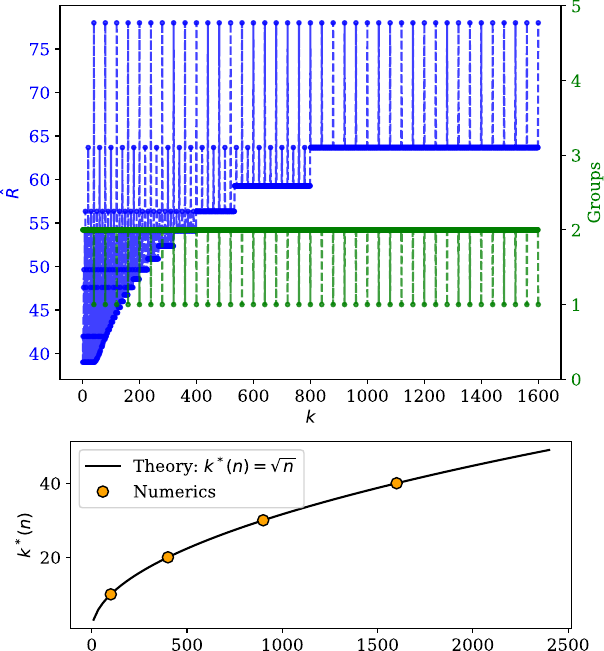}
    \caption{(Top panel) Measuring an $n = 40 \times 40$ Bacon-Shor Hamiltonian~\eqref{eqn:bacon_shor_hamiltonian} by grouping terms into $k$-commuting sets for $1 \le k \le n$. The left axis (blue) shows $\hat{R}$ defined in~\eqref{eqn:rhat} as a function of $k$ and the right axis (green) shows the number of groups as a function of $k$. For measuring expectation values, higher $\hat{R}$ is better, and for applications such as Trotterization, a lower number of groups is better. As can be seen, the plot shows a ``threshold'' value $k^* < n$ for which $\hat{R}$ first reaches its maximum value and the number of groups first reaches its minimum value. (Bottom panel) The threshold value $k^*$ as a function of different Hamiltonian sizes $n$. In Sec.~\ref{sec:single-shot}, we prove that $k^*(n) = \sqrt{n}$ for the Bacon-Shor Hamiltonian, in addition to discussing $k^*(n)$ for other Hamiltonian families. Here, our numerical results up to $n = 40 \times 40$ match the theory.}
    \label{fig:bacon-shor}
\end{figure}

A fundamental task in quantum information is measuring the expectation value of a Hermitian operator $H$ in a state $|\psi\rangle$, i.e., evaluating or estimating
\begin{equation}
    \langle \psi | H | \psi \rangle .
\end{equation}
For simplicity we refer to this as ``measuring $H$.'' Without loss of generality we take $H$ to be an $n$-qubit Hamiltonian written as a linear combination of Pauli strings
\begin{equation} \label{eqn:hamiltonian}
    H = \sum_\alpha c^{[\alpha]} P^{[\alpha]}
\end{equation}
where each $c^{[\alpha]}$ is a real coefficient and each $P^{[\alpha]}$ is a tensor product of $n$ Paulis $I$, $X$, $Y$, or $Z$. We use superscripts (e.g., $P^{[\alpha]}$ and $P^{[\beta]}$)  or different symbols (e.g., $P$ and $Q$) to indicate different Pauli strings, and we use subscripts to indicate the (qubit) support of Pauli strings (e.g., for $P = XYZ$, $P_0 = X$, $P_1 = Y$, and $P_2 = Z$).

For any mutually commuting set of $n$-qubit Pauli strings $\{ P^{[\alpha]} \}_\alpha$, there exists a Clifford unitary $U$ such that $U P^{[\alpha]} U^\dagger$ is diagonal for every $\alpha$. Measuring $\{ P^{[\alpha]} \}_\alpha$ can thus be accomplished in a single circuit by appending $U$ to the circuit that prepares $\ket{\psi}$ then measuring in the computational basis. Ref.~\cite{Aaronson_Gottesman_2004} showed that $U$ can be implemented in $O(n ^ 2 / \log n)$ Clifford gates. For $r \le n$ independent Paulis, Ref.~\cite{Crawford_Straaten_Wang_Parks_Campbell_Brierley_2021} showed that $U$ can be implemented in $O(r n / \log r)$ two-qubit Clifford gates. Several algorithms have been developed for efficiently compiling such Clifford circuits \cite{kissinger2019pyzx,schmitz2021graph,bravyi2021clifford}.

Motivated by the desire to have short-depth circuits due to hardware constraints, the notion of qubit-wise commutativity was developed~\cite{qubit-wise-commuting}\footnote{Prior to this reference, the idea was introduced under the name of ``tensor product basis sets''~\cite{Kandala_Mezzacapo_Temme_Takita_Brink_Chow_Gambetta_2017}, although the term qubit-wise commutativity is most commonly used.}. Two Pauli strings $P=\bigotimes_{i=1}^n p_i$ and $Q=\bigotimes_{i=1}^n q_i$ are said to qubit-wise commute if $[p_i, q_i] = 0$ for all $i \in [n]$. If $P$ and $Q$ qubit-wise commute, they can be measured together in a depth-one quantum circuit. The catch is that, given a generic Hamiltonian~\eqref{eqn:hamiltonian}, not all Pauli strings will qubit-wise commute. Approximate optimization algorithms are employed to group Paulis into qubit-wise commuting sets such that each set can be measured with a single depth-one circuit. This provides a trade off, decreasing circuit depth at the cost of increasing the number of circuits needed to be run.

Thus there are two notions of commutativity used in measuring $n$-qubit Hamiltonians: qubit-wise commuting which results in a larger number of circuits of additional depth $d = 1$, and fully commuting which results in a fewer number of circuits of additional depth $d = O(n^2 / \log n)$. In this paper we introduce a notion of commutativity which interpolates between these two. The idea is simply to consider commutativity on blocks of size $k \in \mathbb{N}$, and we call this $k$-commutativity. For example, we say Paulis $XXYY$ and $ZZXX$ two-commute since $[XX, ZZ] = 0$ (the first two-qubit block) and $[YY, XX] = 0$ (the second two-qubit block). Qubit-wise commutativity is recovered by $k = 1$ and (full) commutativity is recovered by $k = n$. The advantage, with respect to measuring Hamiltonians, is the potential to reduce measurement complexity by allowing for increased circuit depth at intermediate $1 \le k \le n$ values.
%An example of this is shown in Fig.~\ref{fig:fermi-hubbard} in which $k$-commutativity is applied to measuring the two-dimensional Fermi-Hubbard Hamiltonian. As $k$ increases, the number of groups decreases, and the value of $\hat{R}$ (a ratio of the advantage of grouping compared to not grouping, to be defined later) increases, demonstrating reduced measurement complexity.
An example of this is shown in Fig.~\ref{fig:bacon-shor}, in which $k$-commutativity is applied to measuring the Hamiltonian of the $n$-qubit Bacon-Shor code. As shown in this plot, and proved later on in Sec.~\ref{sec:single-shot}, this family of Hamiltonians features a ``threshold value'' $k^* = O(\sqrt{n})$ at which the number of groups is minimized, and the value of $\hat{R}$ (a ratio of the advantage of grouping compared to not grouping, to be defined later) is globally maximized. This interesting behavior, which we discuss in detail for several Hamiltonian families, demonstrates how our notion of $k$-commutativity can optimally reduce circuit depths when measuring expectation values by exploiting fine-grained commutativity structure, a crucial task for both near-term and fault-tolerant quantum computing.

Although we primarily focus on the application of measurement reduction in this paper, the notion of $k$-commutativity may have applications in other areas and so we first introduce it independently in Sec.~\ref{sec:k-commutativity}. We then describe numerical results for measurement reduction in Sec.~\ref{sec:results-measurement-reduction}, and finally discuss the asymptotic measurement complexity of $k$-commutativity for several families of $n$-qubit Hamiltonians in Sec.~\ref{sec:single-shot}.

\section{$k$-commutativity} \label{sec:k-commutativity}

Let $P = P_0 \otimes \cdots \otimes P_{n - 1}$ be an $n$-qubit Pauli string. Given integers $a < b$, let $P_{a: b}$ denote $P$ restricted to $a$ through $b - 1$, i.e. $P_a \otimes \cdots \otimes P_{b - 1}$. For convenience we consider $P_a, ..., P_{b-1}$ to be identity if $a < 0$, and similarly $P_n, ..., P_{b - 1}$ to be identity if $b > n - 1$. We define $k$-commutativity as follows.
\begin{definition}[$k$-commutativity] \label{def:k-commutativity}
    Two $n$-qubit Pauli strings $P = P_0 \otimes \cdots \otimes P_{n - 1}$ and $Q = Q_{0} \otimes \cdots \otimes Q_{n - 1}$ are said to $k$-commute for integer $1 \le k \le n$ if and only if $[P_{ik: (i + 1)k}, Q_{ik: (i + 1)k}] = 0$ for all $i = 0, ..., \lfloor n / k \rfloor - 1$. If $P$ and $Q$ $k$-commute, we write $[P, Q]_k = 0$.
\end{definition}
\noindent Note that this definition assumes an ordering of the operators and throughout the paper we consider the ordering to be specified when evaluating $k$-commutativity. (If $k$ does not divide $n$, we can trivially pad the Pauli strings with identity operators to make this true.)

As mentioned, we note that $k = 1$ corresponds to qubit-wise commutativity~\cite{qubit-wise-commuting} and $k = n$ corresponds to full commutativity. Due to its definition, $k$-commutativity inherits the usual properties of commutativity. In particular, $k$-commutativity is reflexive ($[P, Q]_k = 0$ implies $[Q, P]_k = 0$) but not transitive (if $[P, Q]_k = 0$ and $[Q, R]_k = 0$, then $[P, R]_k$ could be zero or nonzero).

Additionally, we prove the following properties.

\begin{theorem}
    If $[P, Q]_k = 0$ then $[P, Q]_{ck} = 0$ for $c \in \mathbb{N}$.
\end{theorem}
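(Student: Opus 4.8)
The plan is to observe that each block of size $ck$ appearing in the test for $ck$-commutativity is, in the fixed ordering, an ordered concatenation of $c$ consecutive blocks of size $k$, and that commutativity of the restrictions on each of those smaller blocks forces commutativity of the restriction on the larger block. Concretely, I would fix $i \in \{0, \ldots, \lfloor n/(ck)\rfloor - 1\}$ and write $P_{i(ck):(i+1)(ck)} = \bigotimes_{j=0}^{c-1} P_{(ic+j)k:(ic+j+1)k}$, and likewise for $Q$. The hypothesis $[P,Q]_k = 0$ gives $[P_{mk:(m+1)k}, Q_{mk:(m+1)k}] = 0$ for every $m \in \{0,\ldots,\lfloor n/k\rfloor-1\}$, so it remains only to check that all indices $m = ic+j$ that occur here lie in this range.

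That index check is the one spot that needs a little care: the largest index appearing is $m = (\lfloor n/(ck)\rfloor - 1)c + (c-1) = c\lfloor n/(ck)\rfloor - 1$, and since $c\lfloor n/(ck)\rfloor$ is an integer bounded above by $c\cdot n/(ck) = n/k$, it is at most $\lfloor n/k\rfloor$; hence $m \le \lfloor n/k\rfloor - 1$, exactly as required. (Even without this manipulation, the identity-padding convention adopted in Definition~\ref{def:k-commutativity} makes out-of-range indices harmless.) Finally I would invoke the elementary fact that the tensor product respects products factorwise: if $A = \bigotimes_j A_j$ and $B = \bigotimes_j B_j$ with $A_j B_j = B_j A_j$ for all $j$, then $AB = \bigotimes_j (A_j B_j) = \bigotimes_j (B_j A_j) = BA$. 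Applying this with $A_j = P_{(ic+j)k:(ic+j+1)k}$ and $B_j = Q_{(ic+j)k:(ic+j+1)k}$ yields $[P_{i(ck):(i+1)(ck)}, Q_{i(ck):(i+1)(ck)}] = 0$ for every admissible $i$, i.e. $[P,Q]_{ck} = 0$.

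I do not anticipate a real obstacle: the only mildly delicate point is the floor-function bookkeeping when $ck \nmid n$, which is precisely why the padding convention is convenient. It is also worth noting that no Pauli-specific property (e.g. that two Pauli strings either commute or anticommute) is used anywhere — the argument goes through for arbitrary operators on the tensor factors, so the result is really a statement about the combinatorics of the blocking rather than about Paulis.
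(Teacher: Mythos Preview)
Your proposal is correct and follows the same approach as the paper's proof, which simply observes that each block of size $ck$ decomposes into $c$ consecutive $k$-blocks whose restrictions commute by hypothesis. Your version is more careful with the floor-function bookkeeping and makes explicit the tensor-product commutativity step, but the underlying idea is identical.
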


\begin{proof}
    By assumption, in each $ck$ block of $P$ and $Q$ there are $c$ blocks of $k$-commuting strings, thus each block of size $ck$ commutes.
\end{proof}

Note that this generalizes from~\cite{qubit-wise-commuting} the fact that qubit-wise commutativity implies (full) commutativity. The converse may be true or false --- in other words, if $P$ $k$-commutes with $Q$, we cannot infer whether $P$ $k'$-commutes with $Q$ for $k' < k$. The same is true for $k' > k$ and $k'$ not a multiple of $k$.

Our next properties concern the number of Pauli strings that $k$-commute with a given Pauli string. For this we show a result for a slightly more general notion of $k$-commutativity in which the block sizes can vary. In particular, given two $n$-qubit Pauli strings $P$ and $Q$, and $k_1, ..., k_m \in \mathbb{N}$ such that $k_1 + \cdots + k_m = n$, we say that $P$ and $Q$ $(k_1, ..., k_m)$-commute if and only if $[P_{a_j:b_j}, Q_{a_j:b_j}] = 0$ for every $j=1,\dots,m$, where $b_j = \sum_{\ell=1}^{j} k_j$ and $a_j = b_j - k_j$.

\begin{theorem}
\label{thm:largest-size}
Let $k_1, \cdots, k_m \geq 1$ be integers such that $k_1 + \cdots + k_m = n$. Then the largest size of a set of $(k_1,\cdots, k_m)$-commuting $n$-qubit Pauli strings, where no two strings are the same modulo phase factors, is $2^n$. 
\end{theorem}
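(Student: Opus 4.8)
The plan is to establish the value $2^n$ from both sides. Achievability is easy: the $2^n$ diagonal Pauli strings---tensor products of $I$ and $Z$ on the $n$ qubits---are pairwise distinct (hence distinct modulo phase), and they $(k_1,\dots,k_m)$-commute for every block decomposition, since the restriction of a diagonal Pauli to any block is again diagonal and diagonal operators commute. So a $(k_1,\dots,k_m)$-commuting set of size $2^n$ exists.

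For the matching upper bound, let $S$ be any $(k_1,\dots,k_m)$-commuting set of distinct $n$-qubit Pauli strings, and index the blocks by $j=1,\dots,m$, with block $j$ occupying qubits $a_j,\dots,b_j-1$ (so the blocks partition $\{0,\dots,n-1\}$). For each $j$ define the restriction map $\pi_j(P) := P_{a_j:b_j}$, a $k_j$-qubit Pauli string, and set $\pi := (\pi_1,\dots,\pi_m)$. A Pauli string, written as a tensor product of single-qubit Paulis, is recovered from its restrictions to all the blocks, so $\pi$ is injective on $S$ and $|S| = |\pi(S)| \le \prod_{j=1}^m |\pi_j(S)|$. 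By the definition of $(k_1,\dots,k_m)$-commutativity, for each fixed $j$ the set $\pi_j(S)$ consists of pairwise-commuting Pauli strings on $k_j$ qubits. So the theorem reduces to the claim that a pairwise-commuting set of $k$-qubit Pauli strings, no two equal modulo phase, has at most $2^k$ elements, which yields $|S| \le \prod_j 2^{k_j} = 2^{k_1 + \cdots + k_m} = 2^n$.

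For that claim I would invoke the standard symplectic description of the Pauli group: Pauli strings on $k$ qubits modulo phase correspond bijectively to vectors in $\mathbb{F}_2^{2k}$, with two strings commuting if and only if the associated vectors are orthogonal under the canonical symplectic form $\omega$. If a set is pairwise-commuting, then $\omega$ vanishes on every pair of vectors in its image; because $\omega$ is alternating over $\mathbb{F}_2$, bilinearity propagates this to the full $\mathbb{F}_2$-span of the image, which is therefore isotropic and hence of dimension at most $k$, giving at most $2^k$ elements. (Equivalently, a pairwise-commuting set generates an abelian subgroup of the $k$-qubit Pauli group, which modulo phase has order at most $2^k$.) I do not expect a genuine obstacle here: the only points requiring care are the phase bookkeeping in the restriction/injectivity step and the use of the alternating property in passing from pairwise orthogonality to an isotropic span.
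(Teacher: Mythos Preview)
Your argument is correct. Both the explicit diagonal-string construction for the lower bound and the product bound $|S| \le \prod_j |\pi_j(S)| \le \prod_j 2^{k_j}$ via the symplectic/isotropic argument are sound; the only phase bookkeeping needed is that restrictions of phase-free Pauli strings are phase-free, so distinct elements of $\pi_j(S)$ are already distinct modulo phase.

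Your route differs from the paper's in both halves. For the upper bound the paper does not use the block restrictions at all: it simply observes that a $(k_1,\dots,k_m)$-commuting set is in particular a fully commuting set on $n$ qubits, so the cited $2^n$ bound applies directly. Your product-of-restrictions bound reaches the same number with a little more work. For the lower bound the paper gives no explicit witness; instead it starts from a maximal set $S$, shows (by a maximality/contradiction step) that each restricted set $S_j$ must itself be a maximal commuting set on $k_j$ qubits, and then argues that every tensor product of elements of the $S_j$ must lie in $S$, forcing $|S| \ge \prod_j 2^{k_j}$. The paper also proceeds by induction on $m$ (doing $m=2$ first), whereas you handle general $m$ in one shot. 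Your proof is shorter and more self-contained; the paper's argument, on the other hand, yields as a by-product the structural fact that the block restrictions of a maximal $(k_1,\dots,k_m)$-commuting set are themselves maximal commuting sets, which the paper then uses in the corollary immediately following.
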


\begin{proof}
First assume that $m=2$, and let $S$ be a set of the largest possible size of $(k_1,k_2)$-commuting $n$-qubit Pauli strings, with no two strings same modulo phase factors.
Construct the following sets:
\begin{equation}
    S_1 := \{P_{0:k_1}: P \in S\}, \; S_2 := \{P_{k_1:n}: P \in S\},
\end{equation}
that is the restriction of the Paulis in $S$ to the first $k_1$ qubits, and the next $k_2$ qubits. We claim that $S_1$ (resp. $S_2$) is a commuting subset of the $k_1$-qubit (resp. $k_2$-qubit) Pauli group of the largest possible size, such that there are no two Paulis in $S_1$ (resp. $S_2$) which are the same modulo phase factors. 

We first prove this claim for $S_1$ (the proof for $S_2$ is similar). Suppose $S_1$ is not of the largest possible size. Then there exists a $k_1$-qubit Pauli $P \not \in S_1$, which commutes with all the elements in $S_1$, which is also distinct from all the elements in $S_1$ modulo phase factors. Now pick any $Q \in S_2$. Then we have found a new $n$-qubit Pauli $P \otimes Q \not \in S$, which $(k_1,k_2)$-commutes with every element in $S$. This contradicts the assumption that $S$ was the largest possible size to begin with.

Now it is known that for any positive integer $t$, the largest possible size of a commuting subset of $t$-qubit Paulis (modulo phase factors) is $2^{t}$ \cite[Theorem~1]{sarkar2021sets}. Thus we have $|S_1| = 2^{k_1}$ and $|S_2| = 2^{k_2}$. Next, observe that if $P \in S_1$ and $Q \in S_2$, then $P \otimes Q \in S$, because $P \otimes Q$ $(k_1,k_2)$-commutes with every element in $S$, and $S$ has the largest possible size. This proves that $|S| \geq 2^{k_1 + k_2} = 2^n$. But $S$ is also a commuting subset on $n$-qubits, and hence by \cite[Theorem~1]{sarkar2021sets}, we must have $|S| \leq 2^n$. Combining, we can conclude that $|S| = 2^n$.

The general case for $m > 2$ now follows by iterating this argument.
\end{proof}

We note an easy consequence of Theorem~\ref{thm:largest-size} below:
\begin{corollary}
Let $k_1, \cdots, k_m \geq 1$ be integers such that $k_1 + \cdots + k_m = n$. Let $S$ be a set of the largest size of $(k_1,\cdots, k_m)$-commuting $n$-qubit Pauli strings, where no two strings are the same modulo phase factors. For every $j=1,\dots,m$, define $S_j := \{P_{a:b}: P \in S, b=\sum_{\ell=1}^{j} k_{\ell}, a= b-k_j \}$. Then we have the following:
\begin{enumerate}[(i)]
    \item $S':=\{\pm P, \pm iP: P \in S\}$ is a commuting subgroup of the largest possible size of the $n$-qubit Pauli group.
    \item For every $j=1,\dots,m$, $S'_j:=\{\pm P, \pm iP: P \in S_j\}$ is a commuting subgroup of the largest possible size of the $k_j$-qubit Pauli group.
\end{enumerate}
\end{corollary}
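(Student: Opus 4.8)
Both parts are instances of a single fact about maximal commuting sets of Paulis, so the plan is to isolate that fact and then feed it the appropriate sets. Concretely, I would prove: if $T$ is a commuting set of $t$-qubit Pauli strings with distinct elements modulo phase and of the maximal size $2^t$, then $T' := \{\pm P, \pm i P : P \in T\}$ is a maximal commuting subgroup of the $t$-qubit Pauli group, of order $2^{t+2}$. Granting this, part (i) is the case $t = n$, $T = S$ --- the proof of Theorem~\ref{thm:largest-size} already shows $|S| = 2^n$ and that $S$ is a commuting set of $n$-qubit Paulis --- and part (ii) is the case $t = k_j$, $T = S_j$, once we know each $S_j$ is a commuting set of $k_j$-qubit Paulis of size $2^{k_j}$.

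The first preliminary step is therefore to record that each $S_j$ is a maximal-size commuting set of $k_j$-qubit Paulis with distinct elements modulo phase. This is exactly the claim proved for $j \in \{1,2\}$ (with $m=2$) inside the proof of Theorem~\ref{thm:largest-size}, and the same swap-one-block argument applies for general $m$: if a $k_j$-qubit Pauli $P'$ commuted with all of $S_j$ but were distinct from every element of $S_j$ modulo phase, then replacing the $j$-th block of any fixed $P \in S$ by $P'$ yields an $n$-qubit Pauli that still $(k_1,\dots,k_m)$-commutes with all of $S$ and is distinct from every element of $S$ modulo phase, contradicting maximality of $S$. With \cite[Theorem~1]{sarkar2021sets} this gives $|S_j| = 2^{k_j}$.

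Next I would prove the isolated fact. Commutativity of $T'$ is immediate since $\pm 1, \pm i$ are central. For the subgroup property: the all-identity string lies in $T$ (otherwise adjoining it would enlarge $T$), so $\pm I, \pm i I \in T'$; and for $P, Q \in T$, commutativity and Hermiticity of $P, Q$ force $PQ = \eta R$ with $R$ a bare Pauli string and $\eta \in \{\pm 1\}$, while any operator commuting with both $P$ and $Q$ commutes with $PQ$, so $R$ commutes with all of $T$ and hence $R \in T$ by maximality. Thus $T'$ is closed under products, and closed under inverses since each $\epsilon P$ (with $\epsilon \in \{\pm 1, \pm i\}$) satisfies $(\epsilon P)^{-1} = \bar\epsilon P \in T'$; so $T'$ is an abelian subgroup with $|T'| = 4\cdot 2^t = 2^{t+2}$. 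It is of the largest possible size because any abelian subgroup of the $t$-qubit Pauli group, reduced modulo its central scalars, would otherwise be a commuting set of more than $2^t$ distinct Paulis, contradicting \cite[Theorem~1]{sarkar2021sets}.

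The main obstacle is the closure step: keeping the phase bookkeeping straight (the product of two commuting bare Paulis is only $\pm$ a bare Pauli, and elements are self-inverse only up to phase) and invoking maximality of $T$ correctly so that the reduced product $R$ genuinely lands back in $T$. Everything else --- commutativity, the identity, the size count, and the two applications to $S$ and to each $S_j$ --- is then routine.
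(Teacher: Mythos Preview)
Your proof is correct and follows the same overall skeleton as the paper's: first extract from the proof of Theorem~\ref{thm:largest-size} that $|S|=2^n$ and each $|S_j|=2^{k_j}$, then conclude that adjoining all four phases yields a maximal abelian subgroup of the relevant Pauli group. The only real difference lies in that second step. The paper simply invokes \cite[Theorem~1, Lemma~4]{sarkar2021sets} as a black box to pass from ``commuting set of maximal size $2^t$'' to ``subgroup of order $2^{t+2}$,'' whereas you reprove that implication by hand via the closure argument (if $P,Q\in T$ and $PQ=\eta R$ with $R$ a bare Pauli, then $R$ commutes with all of $T$ and hence must already lie in $T$ by maximality). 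Your route is thus more self-contained and makes the mechanism explicit; the paper's is shorter but defers the work to the external reference. Both are the same argument at heart.
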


\begin{proof}
By the proof of Theorem~\ref{thm:largest-size}, we already know that $|S|=2^n$, and $|S_j|=2^{k_j}$ for each $j$. Thus $|S'| = 2^{n+2}$, and $|S'_j| = 2^{k_j + 2}$, by counting all the different phases. It now follows from size considerations \cite[Theorem~1, Lemma~4]{sarkar2021sets} that $S'$ is a subgroup of the $n$-qubit pauli group, while $S'_j$ is a subgroup of the $k_j$-qubit pauli group, for each $j$.
\end{proof}

It is also worth noting the following result that quantifies how many elements of the $n$-qubit Pauli group $(k_1,\dots,k_m)$-commutes with any given $n$-qubit Pauli $P$ (cf. \cite[Lemma~3]{sarkar2021sets}):

\begin{theorem}
\label{thm:number-k-commuting}
Let $P$ be a $n$-qubit Pauli string, and let $k_1, \cdots, k_m \geq 1$ be integers such that $k_1 + \cdots + k_m = n$. Then the number of distinct (up to phase factors) $n$-qubit Paulis that $(k_1,\dots,k_m)$-commute with $P$, including itself, is given by $4^n/2^m$.
\end{theorem}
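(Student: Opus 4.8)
The plan is to exploit that $(k_1,\dots,k_m)$-commutativity with $P$ is a conjunction of $m$ ordinary commutation conditions living on disjoint qubit registers, so the count factorizes over the blocks and reduces to the single-register case, which is essentially \cite[Lemma~3]{sarkar2021sets}. Concretely, write $P = P^{(1)} \otimes \cdots \otimes P^{(m)}$, where $P^{(j)} := P_{a_j:b_j}$ is the $k_j$-qubit Pauli on block $j$, with $b_j = \sum_{\ell=1}^j k_\ell$ and $a_j = b_j - k_j$ as in the definition preceding Theorem~\ref{thm:largest-size}.

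First I would record the elementary bijection: every $n$-qubit Pauli string, considered up to phase, factors uniquely as a tensor product $R = R^{(1)} \otimes \cdots \otimes R^{(m)}$ of block Paulis $R^{(j)}$ (each up to phase), and conversely every such tuple gives a distinct class; so the $4^n = \prod_j 4^{k_j}$ Paulis mod phase are in bijection with tuples $(R^{(1)},\dots,R^{(m)})$. By Definition~\ref{def:k-commutativity} in its varying-block-size form, $R$ $(k_1,\dots,k_m)$-commutes with $P$ iff $[P^{(j)},R^{(j)}]=0$ for every $j$; since distinct blocks act on disjoint qubits these conditions are independent, so the number of admissible classes is $\prod_{j=1}^m N_j$, with $N_j$ the number of $k_j$-qubit Paulis (mod phase) commuting with $P^{(j)}$. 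Then I would invoke the standard fact: identifying $t$-qubit Paulis mod phase with $\mathbb{F}_2^{2t}$ equipped with the symplectic form, commutation with a nonidentity Pauli cuts out a hyperplane, so $N_j = 4^{k_j}/2$; this is \cite[Lemma~3]{sarkar2021sets} applied on $k_j$ qubits. Multiplying gives $\prod_j 4^{k_j}/2 = 4^n/2^m$, and $P$ itself is counted since $[P^{(j)},P^{(j)}]=0$ for all $j$.

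The one point to watch — and the only thing I would want to pin down against the intended statement — is blocks on which $P$ acts as the identity: there $N_j = 4^{k_j}$ rather than $4^{k_j}/2$, so the clean formula $4^n/2^m$ tacitly assumes every block $P^{(j)}$ is nonidentity (otherwise one replaces $2^m$ by $2^{m'}$, with $m'$ the number of nonidentity blocks). Apart from that bookkeeping I do not anticipate a genuine obstacle: the argument is just ``factor over blocks, apply the known single-register count, multiply,'' and the only real care is the up-to-phase correspondence between $n$-qubit Paulis and tuples of block Paulis, so that no phase is over- or under-counted.
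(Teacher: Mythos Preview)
Your proof is correct and essentially identical to the paper's: both factor over blocks, invoke \cite[Lemma~3]{sarkar2021sets} on each $k_j$-qubit block to obtain $4^{k_j}/2$ commuting Paulis modulo phase, and multiply to get $4^n/2^m$. Your caveat about blocks on which $P$ acts as the identity is well-taken and is a subtlety the paper's own proof also passes over silently.
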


\begin{proof}
For every $j=1,\dots,m$, define $b_j = \sum_{\ell=1}^{j} k_j$, and $a_j = b_j - k_j$. If $Q$ is a $n$-qubit Pauli that $(k_1,\dots,k_m)$-commutes with $P$, then for every $j$, we have $[P_{a_j:b_j}, Q_{a_j:b_j}] = 0$. By \cite[Lemma~3]{sarkar2021sets}, the number of distinct choices (modulo phase factors) for $Q_{a_j:b_j}$ is $4^{k_j}/2$. This gives that the total number of distinct possible choices (modulo phase factors) for $Q$ is $\prod_{j=1}^{m} (4^{k_j}/2) = 4^{n}/2^{m}$.
\end{proof}

Our last property concerns the depth of the diagonalization circuit required to measure a set of $k$-commuting Paulis. If it were possible to measure all sets of (fully) commuting Pauli strings with a constant depth circuit, there would be no need to consider $k$-commutativity with $k < n$ to reduce the diagonalization circuit depth. Thus we seek a set of commuting Pauli strings which require greater-than-constant circuit depth to measure. The following theorem establishes this.

\begin{restatable}{theorem}{lbound} \label{thm:lower-bound}
% \begin{theorem} \label{thm:lower-bound}
    Consider the gate set $\{\text{CNOT}, H, S, I\}$ on $n$-qubits ($n \geq 2$). Let $\mathcal{U}$ denote the set of distinct Clifford unitaries that can be obtained by $d$ applications of gates from this gate set. Then there exists a set of $r \leq n$ independent commuting $n$-qubit Paulis, none of which are the same modulo phase factors, which are not simultaneously diagonalized upon conjugation by any Clifford in $\mathcal{U}$, as long as the following condition holds:
    \begin{equation}
        d < \frac{\sum_{k=0}^{r-1} \log_2 (1 + 2^{n-k})}{\log_2 (n^2 +n + 1)}.
    \end{equation}
\end{restatable}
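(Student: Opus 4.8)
The plan is a pigeonhole argument: compare the number of Cliffords reachable with $d$ gates against the number of candidate sets of $r$ independent commuting Paulis, exploiting the fact that any fixed Clifford $U$ diagonalizes exactly those Paulis lying in the single maximal commuting subgroup $U^{\dagger}\mathcal{D}U$, where $\mathcal{D}$ is the group of diagonal (i.e. $I/Z$) Paulis, $|\mathcal{D}| = 2^n$ modulo phase. Consequently a set $\{P_1,\dots,P_r\}$ of independent commuting Paulis is simultaneously diagonalized by $U$ if and only if every $P_j \in U^{\dagger}\mathcal{D}U$, so the sets that $U$ handles are precisely the $r$-element independent subsets of that one rank-$n$ commuting subgroup.

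First I would bound $|\mathcal{U}|$. Each gate application is a choice of $\mathrm{CNOT}$ on an ordered pair of distinct qubits ($n(n-1)$ options), $H$ on a qubit ($n$ options), $S$ on a qubit ($n$ options), or $I$ ($1$ option), for a total of $n(n-1)+n+n+1 = n^2+n+1$; composing $d$ of them gives $|\mathcal{U}| \le (n^2+n+1)^d$ (this is the only place $n\ge 2$ enters, so that $\mathrm{CNOT}$ exists). Next I would count, modulo phase, the ordered $r$-tuples of independent mutually commuting Paulis: after fixing the first $k$, the centralizer of the rank-$k$ subgroup they generate has $2^{2n-k}$ elements modulo phase --- each added independent commuting generator halves the centralizer, iterating the $k=1$ count of \cite[Lemma~3]{sarkar2021sets} --- and $2^k$ of these already lie in the chosen span, leaving $2^{2n-k}-2^k$ admissible choices for $P_{k+1}$; hence the total is $T_r = \prod_{k=0}^{r-1}\left(2^{2n-k}-2^k\right)$, which is positive since $r\le n$. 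For a single Clifford $U$, the tuples it diagonalizes are the ordered independent $r$-tuples inside $U^{\dagger}\mathcal{D}U \cong \mathbb{F}_2^{n}$, of which there are $D_r = \prod_{k=0}^{r-1}\left(2^{n}-2^k\right)$.

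Finally I would form the ratio. Using $2^{2n-k}-2^k = 2^k(2^{n-k}-1)(2^{n-k}+1)$ and $2^{n}-2^k = 2^k(2^{n-k}-1)$, each factor collapses and
\begin{equation}
    \frac{T_r}{D_r} \;=\; \prod_{k=0}^{r-1}\bigl(1 + 2^{\,n-k}\bigr).
\end{equation}
The number of sets diagonalized by \emph{some} $U\in\mathcal{U}$ is at most $|\mathcal{U}|\cdot D_r$ (I count ordered tuples here; passing to unordered sets divides both $T_r$ and this bound by $r!$ and changes nothing), so a set diagonalized by no $U\in\mathcal{U}$ exists as soon as $|\mathcal{U}|\cdot D_r < T_r$, i.e. as soon as $(n^2+n+1)^d < \prod_{k=0}^{r-1}(1+2^{n-k})$; taking $\log_2$ of both sides yields exactly the claimed inequality. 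The argument is essentially bookkeeping; the points that need care --- and where an error would creep in --- are the identification of the Paulis diagonalized by a fixed Clifford with a single maximal commuting subgroup, the modulo-phase centralizer counts $2^{2n-k}$, and the (deliberately loose, but sufficient) upper bound $|\mathcal{U}|\le(n^2+n+1)^d$, which overcounts because distinct gate words can realize the same unitary.
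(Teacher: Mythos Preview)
Your proposal is correct and follows essentially the same counting/pigeonhole argument as the paper: bound $|\mathcal{U}|\le(n^2+n+1)^d$, identify the Paulis diagonalized by a fixed Clifford with a single maximal commuting subgroup of size $2^n$, count independent commuting $r$-tuples globally versus inside that subgroup, and compare via a union bound. The only cosmetic differences are that the paper works with unordered sets (carrying a $1/r!$ that cancels, as you note) and cites \cite[Lemma~11]{sarkar2021sets} for the global count directly, whereas you rebuild it by iterating the centralizer-halving argument; your explicit factorization $2^{2n-k}-2^k=2^k(2^{n-k}-1)(2^{n-k}+1)$ is in fact a touch cleaner than the paper's simplification step.
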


\noindent Note that the minimum depth of an $n$-qubit circuit with $d$ gates is $\lceil d / n \rceil$ We prove Theorem~\ref{thm:lower-bound} in Appendix~\ref{sec:lower-bound-diagonalization-gate-complexity}.

Last, we note that the procedure for constructing a Clifford circuit to measure a set of $k$-commuting Paulis is identical to the case for (full) commutation~\cite{Crawford_Straaten_Wang_Parks_Campbell_Brierley_2021, Gokhale_Angiuli_Ding_Gui_Tomesh_Suchara_Martonosi_Chong_2020}, just with the diagonalization procedure performed in $\lfloor n / k \rfloor$ blocks of size $k$ instead of one block of size $n$. For example, the two Paulis $X_1 X_2 X_3 X_4$ and $Z_1Z_2Z_3Z_4$ $2$-commute, so the diagonalization procedure would be called on each block of two qubits (in parallel) --- namely to diagonalize $X_1 X_2$ and $Z_1 Z_2$ (the first block), and to diagonalize $X_3 X_4$ and $Z_3 Z_4$ (the second block). The full diagonalization circuit is then a tensor product of the two diagonalization circuits on each block.

\section{Measurement reduction} \label{sec:results-measurement-reduction}

To assess measurement reduction from using $k$-commutativity to estimate expectation values, we apply the technique numerically to several Hamiltonians. In addition to the Hamiltonian, the tests involve an algorithm to group terms into $k$-commuting sets and a metric for the cost. We describe each in turn.

As an initial test we consider the Hamiltonian of the Bacon-Shor code~\cite{ShorSchemeForReducingPRA1995,Bacon_2003} on an \(M \times N\) lattice 
\begin{equation} \label{eqn:bacon_shor_hamiltonian}
    H = \sum_{i=1}^{M}  S^x_i + \sum_{j=1}^{N} S^z_j
\end{equation}
where $S^x_i$ and $S^z_j$ are the stabilizers 
\begin{equation*}
    S_i^x = \bigotimes_{k=0}^{M - 1} X_{i,k} X_{i+1,k},
\end{equation*}
\begin{equation*}
    S_j^z = \bigotimes_{k=0}^{N - 1} Z_{k, j} Z_{k,j+1}.
\end{equation*}
The results are shown in Fig.~\ref{fig:bacon-shor} for a square lattice of size $M = N = 40$. Here and throughout the paper, we use the sorted insertion algorithm~\cite{Crawford_Straaten_Wang_Parks_Campbell_Brierley_2021} to group terms into $k$-commuting sets. This algorithm sorts Hamiltonian terms by coefficient magnitude and then inserts them into $k$-commuting sets in a greedy manner (i.e., terms are inserted into the first mutually $k$-commuting group available, or inserted into a new group if no such group exists). 
This results in a number of groups which is plotted in green in Fig.~\ref{fig:bacon-shor}. As can be seen, there are values of $k$ at regular intervals which minimize the total number of groups needed to measure the Hamiltonian.

As discussed in Ref.~\cite{Crawford_Straaten_Wang_Parks_Campbell_Brierley_2021}, the number of groups is not the most suitable metric for evaluating the cost of measuring a Hamiltonian because terms with larger coefficients require more samples to measure. We borrow the example from~\cite{Crawford_Straaten_Wang_Parks_Campbell_Brierley_2021} to illustrate this: letting $H = 4X_1 + 4 X_2 + Z_2 + Z_1 X_2$, the grouping $\left\{ \{ 4 X_1, Z_2 \}, \{ 4 X_2, Z_1 X_2 \} \right\}$ results in the smallest number of groups, but the grouping $\left\{ \{ 4 X_1, 4 X_2 \}, \{ Z_2 \}, \{ Z_1 X_2 \} \right\}$ requires fewer measurements because the terms with large coefficients are in the same group. (For an additional, non-trivial example, see Appendix~\ref{sec:sorted-insertion-vs-random-insertion}.) To quantify this, the authors of~\cite{Crawford_Straaten_Wang_Parks_Campbell_Brierley_2021} introduce the metric
\begin{equation} \label{eqn:rhat}
    \hat{R} := \left[ 
     \frac{\sum_{i = 1}^{N} \sum_{j = 1}^{m_i} \left| c_{ij} \right|}
     {\sum_{i = 1}^{N} \sqrt{ \sum_{j = 1}^{m_i} \left| c_{ij} \right| ^ 2}}
    \right] ^ 2 .
\end{equation}
Here we have arranged the terms into $N$ groups with $m_1$, ...., $m_N$ Paulis, and $c_{ij}$ is the coefficient of the $j$th Pauli in group $i$. This $\hat{R}$ value has been empirically shown to give a good approximation to the ratio $R := M_u / M_g$ where \(M_u\) is the number of shots needed to estimate an expectation value to precision $\epsilon$ without grouping the Pauli strings, and \(M_g\) is the number of shots to estimate the expectation value to precision $\epsilon$ when the Pauli strings are grouped according to some grouping procedure $g$~\cite{Crawford_Straaten_Wang_Parks_Campbell_Brierley_2021}.
(Note that $\epsilon$ does not appear in $\hat{R}$ because it is a ratio.) The larger $\hat{R}$ is, the larger the reduction in measurement cost. Since we primarily focus on measurement reduction in this work, in the remaining numerical experiments we adopt the $\hat{R}$ metric~\eqref{eqn:rhat} to evaluate measurement reduction. However, minimizing the number of groups is more important for different applications --- e.g., in the application of time evolution, fewer groups is beneficial for Trotter formulas because this minimizes the overall depth due to diagonalizing different groups. For these reasons, we show both $\hat{R}$ and the number of groups in remaining numerical experiments.

Returning to the Bacon-Shor Hamiltonian in Fig.~\ref{fig:bacon-shor}, we see that $\hat{R}(k)$ attains its maximum value at some $k^* < n$. In terms of practical experiments, this means that one can measure the energy of this Hamiltonian at the same cost as full commutativity using a diagonalization circuit with fewer gates than the one obtained via full commutativity. 
To further explore this, we repeat the experiment to find $k^*(n)$ for various Hamiltonian sizes $n$, and show the results in the bottom panel of Fig.~\ref{fig:bacon-shor}. Here, we see $k^*$ scales as $\sqrt{n}$, the linear size of the lattice. In Sec.~\ref{sec:single-shot} we prove this asymptotic scaling and discuss the asymptotics of $k^*(n)$ for several other families of $n$-qubit Hamiltonians.

\begin{table}
    \centering
    \begin{tabular}{c|c|c}
        Molecule    & Number of qubits  & Number of terms in $H$    \\ \hline
        O$_2$       & 16                & 1177                      \\
        B$_2$       & 10                & 156                       \\
        BeH         & 10                & 276                       \\
        BH          & 10                & 276                       \\
        CH          & 10                & 276                       \\
        HF          & 10                & 276                       \\
        C$_2$       & 10                & 156                       \\
        OH          & 10                & 276                       \\
        N$_2$       & 14                & 670                       \\
        Li$_2$      & 10                & 156                       \\
        NaLi        & 14                & 1270                      \\
        N$_2$ (3)   & 32                & 3637 \\
    \end{tabular}
    \caption{Properties of molecular Hamiltonians obtained from HamLib~\cite{hamlib} used to test measurement reduction via $k$-commutativity. Note that N$_2$(3) simulates the triple-bond dissociation studied in~\cite{ibmn2} and taken from~\cite{sqdgithub}. Measurement cost results are shown in Fig.~\ref{fig:hamlib-molecular-hamiltonians} for each molecule shown here.}
    \label{tab:hamlib-molecule-properties-bravyi-kitaev}
\end{table}

\begin{figure}
    \centering
    \includegraphics[width=\columnwidth]{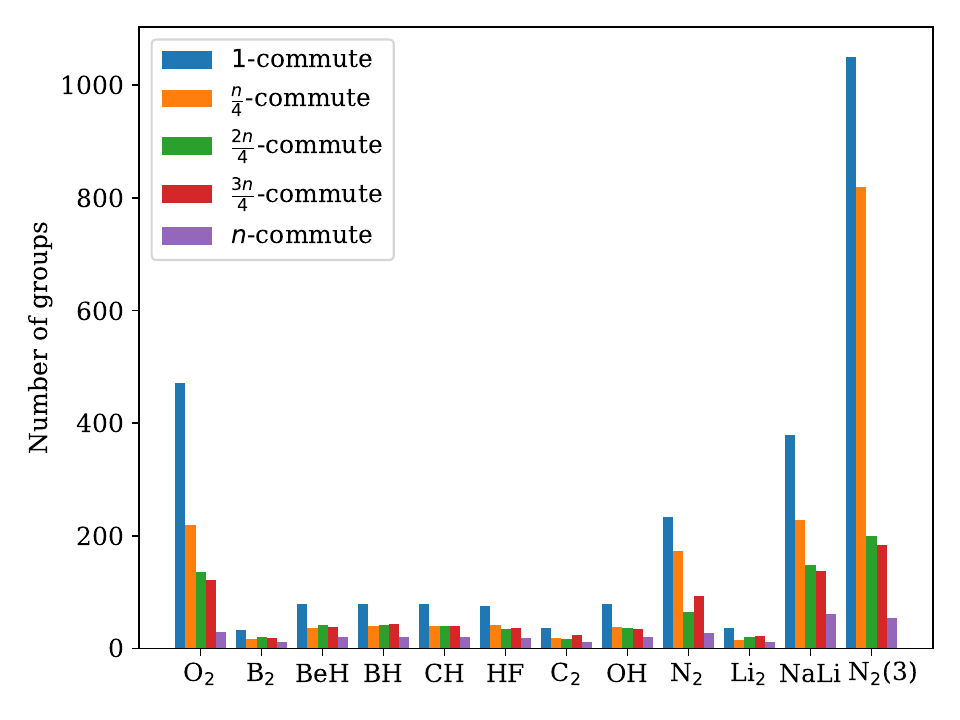}
    \includegraphics[width=\columnwidth]{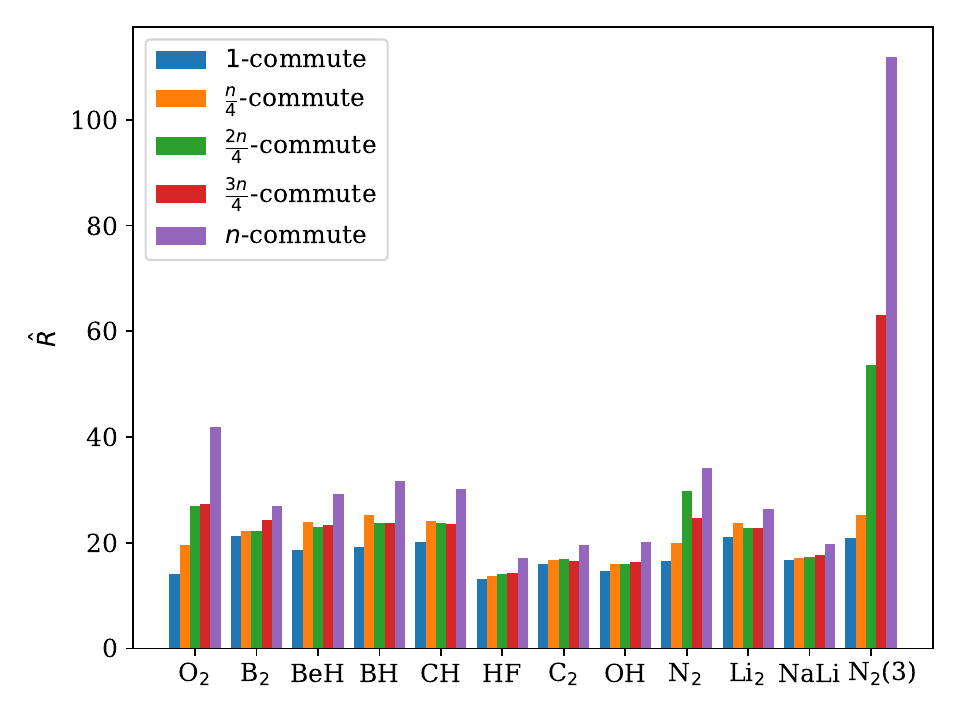}
    \caption{Results of numerical experiments computing the measurement cost $\hat{R}$~\eqref{eqn:rhat} (bottom panel) and number of groups (top panel) for molecular Hamiltonians listed in Table~\ref{tab:hamlib-molecule-properties-bravyi-kitaev}. For each molecule we evaluate the measurement cost of $k$-commutativity for $k \in \{1, n / 4, n / 2, 3n / 4, n\}$. As can be seen, $\hat{R}$ generally increases as $k$ increases for all molecules, with the amount varying for different molecules.}
    \label{fig:hamlib-molecular-hamiltonians}
\end{figure}

Using the same grouping algorithm and $\hat{R}$ metric, we test measurement reduction from $k$-commutativity for molecular Hamiltonians from HamLib~\cite{hamlib} listed in Table~\ref{tab:hamlib-molecule-properties-bravyi-kitaev}. Each molecule in HamLib includes many Hamiltonian instances, with qubit counts increasing past 100 qubits (\textit{i.e.} many choices of activate space size are available). For an initial study we choose Hamiltonians of 10 to 16 qubits, testing 11 different molecules from the ``standard'' subdataset of the electronic structure dataset in HamLib. These molecules are all main group diatomics, where the ccPVDZ basis set and Bravyi-Kitaev encoding were used.
The results are shown in Fig.~\ref{fig:hamlib-molecular-hamiltonians}. Here, we see as expected that $\hat{R}$ is minimal for $k = 1$ and maximal for $k = n$. The behavior of $\hat{R}$ for intermediate values $1 \le k \le n$ is more interesting. While the general trend is that $\hat{R}$ increases as $k$ increases, the behavior is not monotonic. For example, with N$_2$ we see that $\hat{R}(n / 2) > \hat{R} (3n / 4)$. In other cases, $\hat{R}$ is essentially equal for different $k$. For example, with O$_2$ we see that $\hat{R}(n / 2) \approx \hat{R}(3 n / 4)$. In cases such as this, it is of course advantageous to choose the smallest $k$ to reduce the complexity of the diagonalization circuit while retaining approximately the same measurement cost $\hat{R}$. We remark that the behavior in the number of groups as a function of $k$, shown in the top panel of Fig.~\ref{fig:hamlib-molecular-hamiltonians}, is qualitatively similar to the preceding discussion.

As a second numerical study, we calculate $\hat{R}$ for linear chains of Hydrogen on $n$ qubits (meaning there are $n / 2$ Hydrogen atoms in the chain), a simple yet common system used to study scaling (see, e.g.~\cite{huggins2021efficient}). The results for these numerics are shown in Fig.~\ref{fig:hchains}. Unlike more structured Hamiltonians, e.g. the Bacon-Shor Hamiltonian (Fig.~\ref{fig:bacon-shor}), here we do not see a ``threshold'' value of $k^*$ at which $\hat{R}$ saturates. We do, however, still see the general trend that $\hat{R}$ increases as $k$ increases, exhibiting the utility of $k$-commutativity to provide a tradeoff in number of shots vs. circuit depth when measuring expectation values. In terms of asymptotic measurement reduction for molecular Hamiltonians, it is known (see Table I of~\cite{huggins2021efficient}) that qubit-wise grouping ($k = 1$) leads to $O(n^4)$ groups with $O(1)$ measurement circuit depth, and full commuting ($k = n$) leads to $O(n^3)$ groups with $O(n^2 / \log n$) measurement circuit depth. Thus, using $k$-commuting with intermediate values $1 \le k \le n$ allows one to smoothly interpolate between these two complexities, from a constant factor to a linear reduction. (We remark that these statements hold for molecular Hamiltonians --- in Sec.~\ref{sec:single-shot} we discuss other Hamiltonian families with different asymptotics.) In terms of practical measurement reduction for molecular Hamiltonians, as can be seen, this can lead to relatively large reductions in measuring energies --- e.g., for the $n = 64$ qubit linear Hydrogen chain, we see approximately a $\hat{R}(k) = 200$ factor reduction in measuring the Hamiltonian relative to no grouping at $k = n / 2 = 32$. 

In practical applications, the choice of optimal $k$ depends largely on the particular quantum computer and overall measurement budget. If the computer is very noisy and one can afford running many circuits, it is best to choose a smaller $k$ to limit the depth of the diagonalizing circuit. On the other hand if the computer can implement operations with a high fidelity so that a larger diagonalizing circuit could be executed, it is best to choose a larger $k$ so that the overall measurement complexity is reduced. Whether $k$-commutativity will result in a constant-factor reduction or larger reduction in measurement complexity will depend largely on the particular Hamiltonian being studied. Thus far, we have seen two cases arising from numerical studies: (i) constant factor reductions in molecular Hamiltonians, and (ii) quadratic reductions in the Bacon-Shor Hamiltonian~\ref{eqn:bacon_shor_hamiltonian}. We now seek to understand this behavior more rigorously.

\begin{figure}
    \centering
    \includegraphics[width=\linewidth]{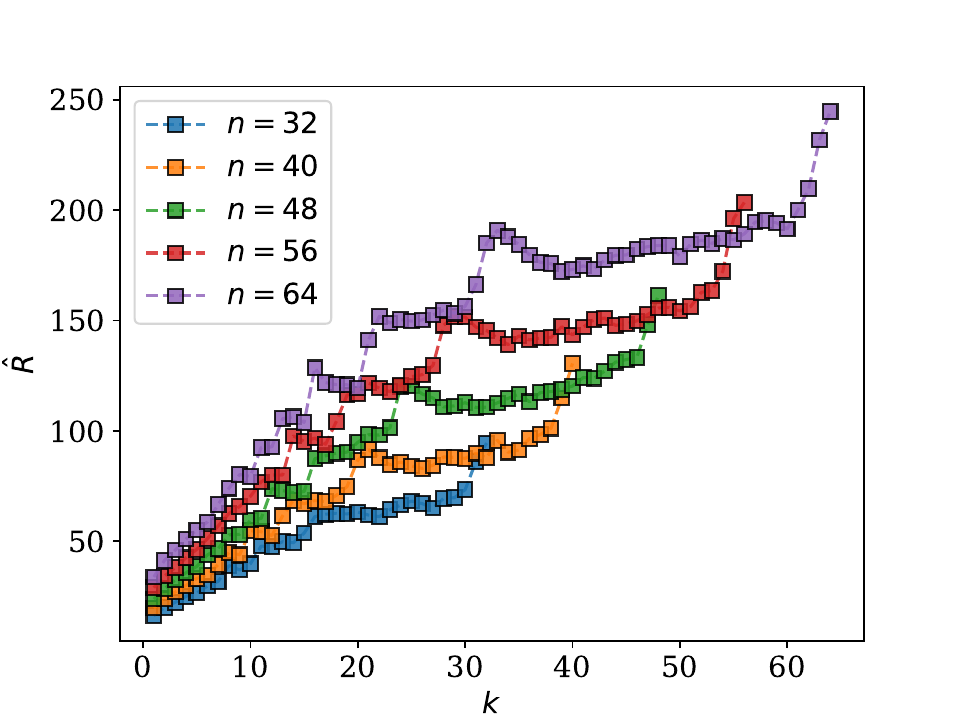}
    \caption{Results of numerical experiments computing the measurement cost $\hat{R}$~\eqref{eqn:rhat} for Hydrogen chains on $n$ qubits. Each molecule consists of a linear chain of $n / 2$ Hydrogen atoms. For each we evaluate the measurement cost for $1 \le k \le n$. As can be seen, increasing $k$ generally results in a higher $\hat{R}$, or equivalently lower measurement cost, although the behavior is not monotonic, and $k$-commutativity enables a relatively smooth interpolation between single-qubit commuting and (fully) commuting.}
    \label{fig:hchains}
\end{figure}

\section{Asymptotics of $k^*(n)$} \label{sec:single-shot}

We now return to the phenomenon of the Bacon-Shor Hamiltonian as shown in Fig.~\ref{fig:bacon-shor}. Here, we see the value of $\hat{R}(k)$ reaches a maximum for $k^* = O(\sqrt{n})$. In this section we prove this scaling and consider the question of which other Hamiltonians, if any, exhibit a $k^*(n)$ which scales sublinearly with $n$.

We first prove that $k^*(n)=O(\sqrt{n})$ for the Bacon-Shor Hamiltonian~\eqref{eqn:bacon_shor_hamiltonian} on a square lattice ($M = N$) of $n = M \times N$ qubits.  The intuition for why this is true is because stabilizer elements for the Bacon-Shor code row-wise and column-wise commute, formally proved in the following lemma.

\begin{lemma}
    \label{thm:Bacon-Shor row-wise}
    Any pair of elements $S$ and $S'$ from the stabilizer group for the Bacon-Shor code on $M\times N$ qubits $M$-commute column-wise.\\
\end{lemma}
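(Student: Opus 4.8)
The plan is to verify column-by-column that the $M$-qubit restrictions of $S$ and $S'$ commute, working directly with the symplectic (commutation) form. First I would fix the column-wise qubit ordering so that the $c$-th block of size $M$ is exactly column $c$ of the lattice, i.e.\ the qubits $\{(0,c),\dots,(M-1,c)\}$; then "$S$ and $S'$ $M$-commute column-wise" means precisely that $[S_{cM:(c+1)M},S'_{cM:(c+1)M}]=0$ for every column $c$. Since the Bacon-Shor stabilizer group is generated by the row-type operators $S_i^x$ (each supported on two full rows) and the column-type operators $S_j^z$ (each supported on two full columns), I would write $S=\prod_i (S_i^x)^{a_i}\prod_j (S_j^z)^{b_j}$ and $S'=\prod_i (S_i^x)^{a_i'}\prod_j (S_j^z)^{b_j'}$ with all exponents in $\{0,1\}$.

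Next I would compute the restriction of a general stabilizer element to a single column $c$. Restricted to column $c$, each $S_i^x$ is an $X$-type Pauli supported on rows $i$ and $i+1$, so the product of the $X$-generators restricts to $X^{\vec v}$ for some $\vec v\in\{0,1\}^M$; and each $S_j^z$ restricts either to the all-$Z$ operator $\prod_r Z_{(r,c)}$ (when column $c$ is one of the two columns it acts on) or to the identity, so the product of the $Z$-generators restricts to $(\prod_r Z_{(r,c)})^{\epsilon_c}$ for some $\epsilon_c\in\{0,1\}$. Hence, up to an irrelevant phase, $S_{cM:(c+1)M}=X^{\vec v}Z^{\epsilon_c\mathbf 1}$ and $S'_{cM:(c+1)M}=X^{\vec v'}Z^{\epsilon_c'\mathbf 1}$, where $\mathbf 1$ is the all-ones vector.

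The crux is the parity observation $|\vec v|\equiv 0\pmod 2$ (and likewise $|\vec v'|$): $\vec v$ is a mod-$2$ sum of vectors $\vec e_i+\vec e_{i+1}$, each of Hamming weight $2$, and Hamming weight is additive mod $2$ under addition of binary vectors, so $|\vec v|\equiv\sum_i 2a_i\equiv 0$. With this, the symplectic inner product controlling commutation of the two restrictions is $\vec v\cdot(\epsilon_c'\mathbf 1)+(\epsilon_c\mathbf 1)\cdot\vec v'=\epsilon_c'|\vec v|+\epsilon_c|\vec v'|\equiv 0\pmod 2$, so $[S_{cM:(c+1)M},S'_{cM:(c+1)M}]=0$. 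Since $c$ was arbitrary, $S$ and $S'$ $M$-commute column-wise. (Applying Theorem~1 with $ck=MN=n$ then recovers the known fact that the stabilizer group is abelian, a useful consistency check.)

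I expect the only real difficulty to be bookkeeping rather than mathematics: choosing the column-wise block decomposition, handling the boundary conventions for which generators act on a given column, and stating cleanly the elementary fact that $X$-weight parity is additive under products. The conceptual point --- that $X$-generators touch each column in an even number of rows while $Z$-generators touch a column either in all rows or in none --- is what forces the per-column symplectic form to vanish, and once that is isolated the argument is immediate.
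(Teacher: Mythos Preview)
Your proposal is correct and complete. It takes a genuinely different route from the paper's proof.

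The paper argues by relating the $M\times N$ Bacon-Shor stabilizer group $\mathcal{S}$ to the $M\times(N+1)$ group $\mathcal{T}$: restricting $\mathcal{T}$ to the support of $\mathcal{S}$ (the first $N$ columns) lands inside $\mathcal{S}$ together with the logical $\bar Z_{N-1}=\bigotimes_k Z_{k,N-1}$, and since stabilizers commute with one another and with logicals, that restricted piece is abelian. The global commutator of any $t_1,t_2\in\mathcal{T}$ then splits over the first $N$ columns and the last column, forcing the column-$N$ restrictions to commute; iterating this peels off one column at a time.

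By contrast, you compute directly with the symplectic form: the $X$-generators have even $X$-weight in every column (each contributes a pair $\vec e_i+\vec e_{i+1}$), the $Z$-generators contribute either all $Z$'s or none per column, and those two facts make the per-column symplectic pairing $\epsilon_c'|\vec v|+\epsilon_c|\vec v'|$ vanish mod $2$. Your approach is more elementary and self-contained---no reference to logical operators or to an auxiliary larger code---and it handles all columns in one stroke rather than inductively. The paper's approach, on the other hand, is more structural: it isolates the mechanism ``restricting to a sublattice lands in stabilizers-plus-logicals,'' which may transfer to other CSS code families where an explicit parity count is less transparent.
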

\begin{proof}
    Let $P \in \mathcal{P}_N$ be a Pauli string and let $R\subseteq \{1,2,\dots,N\}$ be a subset of qubit indices. The restriction of $P$ to subset $R$ is given by $P\vert_R \equiv \bigotimes_{i\in R}P_i$.  The restriction of a group $\mathcal{G}$ to a subset $R$ is the group generated by the generators of $\mathcal{G}$ restricted to the same subset.  For the Pauli group, this restriction is always a group.  We define the support of $P$ to be the set of qubits where $P$ acts non-trivially, $\text{supp}(P) :=\left \{ i \vert P_i \neq I \right\}$.  For a group $\mathcal{G}$, $\text{supp}\left(\mathcal{G}\right)$ is the union of the supports of the elements of $\mathcal{G}$. 
    
    Consider two Bacon-Shor stabilizer groups $\mathcal{S}$ and $\mathcal{T}$ for systems of size $M\times N$ and $M \times \left(N+1\right)$ respectively.  $\mathcal{S}$ and $\mathcal{T}$ are both Abelian and their group elements commute with any choice of Bacon-Shor logical such as $\bar{Z}_{N-1}= \bigotimes_{k=0}^{M-1} Z_{k,N-1}$.  Observe $\mathcal{T}\vert_{\text{supp}(S)} = \mathcal{S} \cup \left(\bigotimes_{k=0}^{M-1}\,Z_{k,N-1}Z_{k,N}\right)\Big\vert_{\text{supp}(\mathcal{S})} = \mathcal{S}\cup\bar{Z}_{N-1}$.  For any two stabilizers $t_1,t_2\in \mathcal{T}$, we have $0=[t_1,t_2] = [t_1\vert _{\text{supp}(\mathcal{S})},t_2\vert_{\text{supp}(\mathcal{S})}] + [t_1\vert _{\text{supp}(\mathcal{S})^c},t_2\vert_{\text{supp}(\mathcal{S})^c}] = 0 + [t_1\vert _{\text{column N}},t_2\vert_{\text{column N}}]$. Since the weight of the stabilizer group restricted to a column is $M$, the stabilizers $M$-commute column-wise.  
\end{proof}

Note that the same argument applies when exchanging rows and columns, thus we have the following corollary.

\begin{corollary}
    \label{thm:Bacon-Shor column-wise}
    Any pair of elements $S$ and $S'$ from the stabilizer group for the Bacon-Shor code on $M\times N$ qubits $N$-commute row-wise.
\end{corollary}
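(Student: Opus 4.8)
The plan is to deduce the corollary from Lemma~\ref{thm:Bacon-Shor row-wise} using the transpose symmetry of the Bacon-Shor code rather than repeating the argument from scratch. First I would pin down what ``$N$-commute row-wise'' means: fix the row-major ordering in which qubit $(i,k)$ occupies position $iN+k$, so that the $n=MN$ qubits split into $M$ consecutive blocks $B_i=\{(i,0),\dots,(i,N-1)\}$ of size $N$, one per row. By Definition~\ref{def:k-commutativity} (with $k=N$), saying $S$ and $S'$ $N$-commute row-wise is exactly saying that the restrictions $S\vert_{B_i}$ and $S'\vert_{B_i}$ commute for every $i$.

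Next I would introduce the duality map $\Phi$ that relabels qubit $(a,b)$ as $(b,a)$ and simultaneously conjugates every single-qubit tensor factor by a Hadamard, i.e.\ swaps $X\leftrightarrow Z$ (and sends $Y\mapsto -Y$). Since per-qubit Hadamard conjugation is unitary and qubit relabeling is a bijection, $\Phi$ is an automorphism of the $n$-qubit Pauli group that preserves all commutators, hence preserves $k$-commutativity with respect to any block partition as long as the partition is transported along with the relabeling. Under $\Phi$, the $X$-type generator $S^x_i=\bigotimes_k X_{i,k}X_{i+1,k}$ of the $M\times N$ code is sent to $\bigotimes_k Z_{k,i}Z_{k,i+1}$, which is the $Z$-type generator of the $N\times M$ Bacon-Shor code, and symmetrically $S^z_j$ is sent to the corresponding $X$-type generator; so $\Phi$ maps the stabilizer group of the $M\times N$ code bijectively onto that of the $N\times M$ code. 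It also carries the row blocks $B_i$ of the $M\times N$ lattice onto the column blocks of the $N\times M$ lattice.

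Finally I would apply Lemma~\ref{thm:Bacon-Shor row-wise} to the $N\times M$ code: any two of its stabilizers $N$-commute column-wise (this is the Lemma verbatim, with the roles of $M$ and $N$ interchanged, since the column blocks there have size $N$). Pulling this statement back through $\Phi^{-1}$ yields that any two stabilizers of the $M\times N$ code $N$-commute row-wise, which is the claim. Equivalently, and this is the version worth spelling out briefly if the duality bookkeeping feels opaque, one can rerun the proof of Lemma~\ref{thm:Bacon-Shor row-wise} line for line: embed the $M\times N$ code into the $(M+1)\times N$ code, use that the complementary-support commutator of two stabilizers of the larger code vanishes to conclude that their restrictions to the new row $N$-commute, and note that this row has weight $N$. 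I do not expect a real obstacle here; the one place needing a little care is verifying that $\Phi$ genuinely respects the block structure — that the relabeling $(a,b)\mapsto(b,a)$ sends the partition of the $M\times N$ lattice into rows onto the partition of the $N\times M$ lattice into columns — which is immediate once the index conventions are fixed.
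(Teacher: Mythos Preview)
Your proposal is correct and takes essentially the same approach as the paper, which proves the corollary in a single sentence by noting that ``the same argument applies when exchanging rows and columns.'' Your duality map $\Phi$ (transpose the lattice and conjugate by Hadamards) is simply an explicit formalization of that row/column exchange, and your alternative of rerunning the Lemma's proof with the $(M+1)\times N$ embedding is exactly the direct version of the same symmetry argument.
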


Now, consider a square $M = N$ lattice of $n$ qubits. Since all coefficients in the Bacon-Shor Hamiltonian are one, $\hat{R}$ is maximized by minimizing the number of groups. The above lemma shows that all terms in the Hamiltonian can be placed into one group by taking $k* = M = N = \sqrt{n}$, thus we have proved the following.

\begin{theorem}
    For the Bacon-Shor Hamiltonian~\eqref{eqn:bacon_shor_hamiltonian} on a square lattice of $n$ qubits, $k^* (n) = O(\sqrt{n})$. 
\end{theorem}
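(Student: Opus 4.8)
The plan is to combine the structural fact from Lemma~\ref{thm:Bacon-Shor row-wise} with the observation that, for a Hamiltonian whose Pauli coefficients are all equal, the metric $\hat{R}$ is a decreasing function of the number of groups, so that maximizing $\hat{R}$ is exactly equivalent to minimizing the number of groups. Concretely, with $N$ groups of sizes $m_1,\dots,m_N$ and all $|c_{ij}| = 1$, the numerator of~\eqref{eqn:rhat} equals $\sum_i m_i = T$ (the total number of terms, which is fixed), while the denominator equals $\sum_i \sqrt{m_i}$; by concavity of $\sqrt{\cdot}$ this denominator is minimized (hence $\hat{R}$ maximized) when the terms are packed into as few groups as possible, ideally a single group. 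So the task reduces to exhibiting a value of $k$ for which all $M + N$ Hamiltonian terms mutually $k$-commute, and showing that value is $O(\sqrt{n})$.

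First I would take $k^* = M = N = \sqrt{n}$ (recall $n = M \times N$ with $M = N$), pad with identities if $k^*$ does not divide $n$ in the chosen ordering, and fix the qubit ordering to be column-major so that the $i$-th block of size $k^* = M$ consists of exactly the $M$ qubits in column $i$. Then Lemma~\ref{thm:Bacon-Shor row-wise} (the ``column-wise $M$-commute'' statement) says that for any two stabilizer generators $S, S'$ — in particular any two terms $S^x_i, S^x_j, S^z_i, S^z_j$ of the Hamiltonian~\eqref{eqn:bacon_shor_hamiltonian} — the restrictions to each column block commute. By Definition~\ref{def:k-commutativity} this is precisely $[S, S']_{k^*} = 0$. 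Hence the entire set $\{S^x_i\}_i \cup \{S^z_j\}_j$ is a single $k^*$-commuting group, so the sorted insertion algorithm (or any grouping procedure) produces $N_{\mathrm{groups}} = 1$ at $k = k^*$, which is the global minimum.

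Next I would close the loop: since $\hat{R}$ is maximized precisely when the number of groups is minimized (by the concavity argument above), and a single group is clearly the fewest possible, $k = k^* = \sqrt{n}$ attains the maximum of $\hat{R}(k)$. Therefore the threshold $k^*(n)$ — the smallest $k$ at which $\hat{R}$ first reaches its maximum — satisfies $k^*(n) \le M = \sqrt{n}$, giving $k^*(n) = O(\sqrt{n})$ as claimed. (One could additionally argue a matching lower bound — that no $k = o(\sqrt{n})$ suffices to put all terms in one group, since an $S^x$ stabilizer and an $S^z$ stabilizer overlapping in a single block will generically anticommute on that block unless the block is large enough to span a full row or column — but the stated theorem only asserts the $O(\sqrt{n})$ upper bound, so this is optional.)

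The main obstacle I anticipate is purely bookkeeping rather than conceptual: one must be careful that the indexing of the stabilizers $S^x_i, S^z_j$ in~\eqref{eqn:bacon_shor_hamiltonian} matches the block structure imposed by the chosen qubit ordering, so that ``restriction to the $i$-th size-$M$ block'' genuinely coincides with ``restriction to column $i$'' used in Lemma~\ref{thm:Bacon-Shor row-wise}. Once the ordering is pinned down, the $k$-commutativity of every pair of Hamiltonian terms is immediate from the lemma, and the reduction from $\hat{R}$ to group count is a one-line concavity estimate, so there is no serious analytic difficulty.
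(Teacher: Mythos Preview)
Your proposal is correct and follows essentially the same approach as the paper: use Lemma~\ref{thm:Bacon-Shor row-wise} to put all terms into a single $\sqrt{n}$-commuting group, then observe that with unit coefficients a single group globally maximizes $\hat{R}$. Your write-up is in fact more careful than the paper's --- you make the subadditivity/concavity argument for $\hat{R}$ explicit and flag the qubit-ordering bookkeeping, whereas the paper simply asserts that equal coefficients reduce the problem to minimizing the number of groups.
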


% From an error-correcting perspective, this theorem shows that the columns are stabilizer groups.  These stabilizer groups exhibit an isomorphism $\mathcal{S}\vert_{column} \cong \mathbf{Z}_{2}^{M-1} \times \mathbf{Z}_2 \cong \mathbf{Z}_2^{M}$ and as such the corresponding column-code is a code which represents 0 logical qubits but can detect Z errors and correct X errors.

\begin{figure}
    \centering
    \includegraphics[width=\columnwidth]{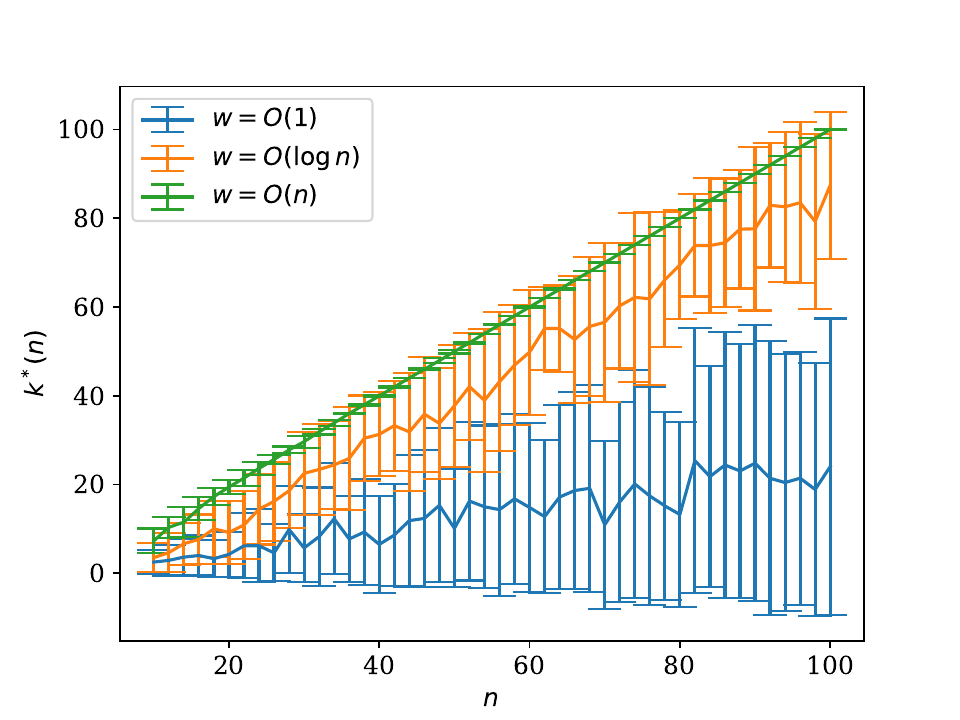}
    \caption{The values of $k^*(n)$ --- i.e., the value of $1 \le k \le n$ in $k$-commuting for which measurement reduction saturates --- for $n$-qubit random Hamiltonians. For each $n$, a random Hamiltonian $H_n$ with $n$ terms is constructed by sampling, for each term, a number of Paulis $t \in \{0, ..., n\}$ from the exponential probability density function $p(t) = \frac{1}{w}\exp(-w t)$ (rounded to the nearest integer). Three different values for $w$ are chosen: constant $w = 2$, logarithmic $w = \ln n$, and linear $w = n / 2$. After sampling a number of Paulis $t$, each Pauli ($X$, $Y$, or $Z$) and each index $i = 0, ..., n - 1$ is chosen uniformly at random to form the term. This process repeats for all terms. The value of $k^*$ is determined as in Fig.~\ref{fig:bacon-shor}. Error bars show the standard deviation over fifty random $H_n$. As can be seen in the plot of $k^*(n)$, the scaling is linear for $w = O(n)$ and constant for $w = O(1)$.}
    \label{fig:kcutoff-random-hamiltonians}
\end{figure}

Because the Bacon-Shor Hamiltonian exhibits $k^*(n) = O(\sqrt{n})$ scaling, it is natural to ask if there are other Hamiltonian families which exhibit similar or even better scaling. After brief consideration, it is easy to see that there are physically-inspired classes of $n$-qubit Hamiltonians for which $k^*(n) = O(1)$. This indeed is true by considering the one-dimensional transverse field Ising model
\begin{equation} \label{eq:transverse_ising}
    H_n = J \sum_{i=1}^{n-1} Z_i Z_{i+1} + g \sum_{i=1}^n X_i.
\end{equation}
Here, the $Z$ type terms in $H_n$ $k$-commute for any $k$, and similarly for the $X$ type terms, thus there are exactly two groups for any $k$ and $k^*(n) = O(1)$. (It's easy to see that there cannot be fewer than two groups, and because the coefficients for all $Z$-type terms are identical and all coefficients for all $X$-type terms are identical, it follows that $\hat{R}$ is maximized by this grouping.)

Another less obvious family of Hamiltonians for which $k^*(n)=O(1)$ is the Bose-Hubbard model \cite{somma2003quantum,sawaya2022mat2qubit,bahrami2024particle} 
\begin{equation} \label{eqn:bose-hubbard-hamiltonian}
    H = \sum_{ij} t_{ij} (b_i^\dagger b_j + h.c.) + g \sum_i n_i (n_i - 1)
\end{equation}
where $b_i^\dagger$ ($b_i$) is the bosonic creation (annihilation) operator for mode/site $i$, $t_{ij}$ is a hermitian matrix giving the kinetic energy of a boson moving from site $i$ to site $j$, $g$ is the site energy, and $n_i := b_i^\dagger b_i$ is the number operator. The intuition for why $k^*(n) = O(1)$ for this model is that, unlike the case of fermionic commutation, bosonic commutation leads to qubit Hamiltonians with bounded locality. This same observation holds for quantum molecular vibrational Hamiltonians \cite{mcardle2019digital,sawaya2020connectivity,ollitrault2020hardware} for which there is a bounded degree of connectivity between vibrational modes. Here, we consider only the case of the Bose-Hubbard Hamiltonian, and discuss this in more detail.

We consider hardcore bosons in which there can be at most one boson per lattice site. (This can be caused by strong repulsive interactions.) 
% On a one dimensional lattice, the Bose-Hubbard Hamiltonian in~\eqref{eqn:bose-hubbard-hamiltonian} reduces to
% \begin{equation} \label{eq:hardcore_bose_hubbard}
%     H = t \sum_i (b_{i+1}^\dagger b_i + b_i^\dagger b_{i+1})
%     + g \sum_j b_j^\dagger b_j b_j^\dagger b_j.
% \end{equation}
Since there is at most one boson per site, we can represent states in the hardcore boson model by using the qubit states $\ket{0}$ and $\ket{1}$ to represent Fock states with zero and one bosons, respectively. With this representation, we can represent $b_j^\dagger$ by the qubit operator $\ket{1}\bra{0} = \frac{1}{2} (X_j - i Y_j)$, and $b_j$ by $\ket{0}\bra{1} = \frac{1}{2} (X_j + i Y_j)$. Thus, the hardcore boson qubit Hamiltonian is, in one dimension,
\begin{equation*}
    H = \frac{t}{2} \sum_i (X_i X_{i+1} + Y_i Y_{i+1})
    + 2g \sum_j (I - Z_j).
\end{equation*}
In the $k=1$ case, all of the $Z_j$ terms can be put into the same group. Similarly, all of the $X_i X_{i+1}$ terms $1$-commute with each other, as do the $Y_i Y_{i+1}$ terms. Thus, similar to the TFIM, this model exhibits $k^* = O(1)$. In Appendix~\ref{sec:bose-hubbard-numerics}, we include numerical calculations of $k^*$ without assuming hardcore bosons. These results  suggest that $k^*(n) = O(1)$ for the Bose-Hubbard model in general, which we here leave as a conjecture. 
% Moving to $k=2$, however, we note that $Y_i Y_{i+1} \propto X_i Z_i X_{i+1} Z_{i+1}$, meaning that the strings $Y_i Y_{i+1}$ and $X_i X_{i+1}$ 2-commute. When the algorithm attempts to sort these strings into the same group, however, other strings that act on qubits $i$ and $i+1$ cannot be subsequently sorted into that group. The fact that two strings in the group cannot share a support in the $k=2$ case means that there is an advantage to $k = 1$ for this Hamiltonian when using a greedy algorithm to sort the Pauli strings.

As a final example, we numerically compute the value of $k^*(n)$ for random $n$-qubit Hamiltonians in one dimension. To generate these Hamiltonians $H_n$, we set the number of terms equal to $n$. For each term, a number $t$ of Paulis is sampled from the exponential distribution $p(t) = \frac{1}{w} \exp(-w t)$ with parameter $w$. Then, $t$ Paulis $X$, $Y$, $Z$ are chosen uniformly at random, and the qubits that they act on are chosen uniformly at random. Once the Hamiltonian $H_n$ is constructed, $k^*(n)$ is numerically computed, and this process repeats for fifty random instances. The results are shown in Fig.~\ref{fig:kcutoff-random-hamiltonians} for three values of $w$, namely constant $w = 2$, logarithmic $w = \ln n$, and linear $w = n / 2$. As can be seen in Fig.~\ref{fig:kcutoff-random-hamiltonians}, we observe that the $k^*(n)$ is markedly linear for $w = O(n)$ and $w = O(\log n)$. Up to $n = 100$, the case for $w = O(1)$ is less clear. However, based on the numerics and intuition from the previously-studied case of the Bacon-Shor Hamiltonian, we conjecture that random $n$-qubit Hamiltonians with a linear number of terms of constant weight are another family of Hamiltonians for which $k^*(n) = O(1)$, though we leave a rigorous analysis of this conjecture for future work. One potential application of this conjecture is in error mitigation via quantum subspace expansion~\cite{McClean_Jiang_Rubin_Babbush_Neven_2020}. In this framework, one measures the expectation value of an observable projected into the codespace of an error correcting code. By using error correcting codes for which $k^*(n) = O(1)$, for example random codes with constant weight stabilizers, one could potentially measure error-mitigated observables with measurement complexity constant or sublinear in $n$.

\section{Conclusion}

In this work we have introduced a simple notion of commutativity for Pauli strings that considers blocks of size $k \in \mathbb{N}$, and motivated this notion by the practical application of measurement reduction when evaluating expectation values in quantum circuits. In addition to defining $k$-commutativity we have proved several properties of the relation including the size of $k$-commuting sets and the depth of diagonalization circuits. Numerically, we have shown a reduction in the measurement complexity of various Hamiltonians including those from chemical, condensed matter, and error correction models. Finally, we built on an observation in the Bacon-Shor Hamiltonian that measurement complexity saturates for some $k^* < n$ and characterized the asymptotic measurement complexity of $k$-commutativity for several families of $n$-qubit Hamiltonians.

Our work presents a simple and immediately applicable technique to reduce measurement complexity for many Hamiltonians of interest, and it also opens several lines of future work. 
First, while we have focused on qubits in this paper, the notion of $k$-commutativity can be easily extended to qudits and general operators on a tensor product space. However, when the qudit dimension is not $2$, then some of the results obtained in Sec.~\ref{sec:k-commutativity} will become non-trivial to prove. For example, when the qudit dimension is composite, it is much harder to establish analogs of the different lemmas used in Appendix~\ref{sec:lower-bound-diagonalization-gate-complexity} (see~\cite{sarkar2023qudit} to get a sense of how several results for the qudit Heisenberg-Weyl Pauli group is much harder to establish than in the qubit case). Throughout the paper we have fixed an ordering of the qubits when considering $k$-commutativity, but it is possible to consider permutations such that larger $k$-commuting sets and be formed and measurement complexity can be reduced further (at the cost of introducing SWAP operations in the circuit).

Throughout the work we have also assumed that, when evaluating $\langle \psi | H | \psi \rangle$, the depth of the circuit to prepare $|\psi\rangle$ is comparable to the depth of the circuit to diagonalize (the largest $k$-commuting set in) $H$. While this seems like a reasonable assumption for many applications and ansatz states $|\psi\rangle$, one could imagine scenarios in which a very deep circuit is required to prepare $|\psi\rangle$ and the depth of the diagonalization circuit is negligible. Our results still apply in this case but the task of finding an optimal $k$ is less relevant as one would simply choose $k = n$. Further, while we have proved a bound on the diagonalization circuit depth, we have done so with respect to Clifford circuits and it's possible that allowing for non-Clifford gates could reduce this bound. To the best of our knowledge this is an open problem in the general context of constructing diagonalization circuits to measure observables. Our bound is also stated in terms of the total number of gates rather than the number of two-qubit gates which could be a more relevant quantity for certain hardware architectures. Last, we have focused exclusively on the $\hat{R}$ metric which is a reasonable metric for a general setting, but others have been proposed such as Ref.~\cite{Shlosberg_Jena_Mukhopadhyay_Haase_Leditzky_Dellantonio_2023} which directly uses the estimation error. While the quantitative advantage of $k$-commutativity may differ slightly under other metrics, we still expect the same qualitative behavior.

Overall, $k$-commutativity is a simple relation with practical applications in measuring expectation values and potentially in other areas. The idea generalizes and interpolates between two definitions of commutativity --- qubit-wise commuting and fully commuting --- used when measuring expectation values in quantum circuits. Our work highlights the immediate practical applications while also describing and characterizing interesting properties of $k$-commutativity arising in several families of Hamiltonians that can be further pursued in future work.

\textbf{Code and data availability} Software to reproduce results can be found at \href{https://github.com/rmlarose/kcommute}{github.com/rmlarose/kcommute}.

% \vspace{0.5em}

\textbf{Acknowledgments} This work was supported by the Wellcome Leap Quantum for Bio Program. RS thanks the Institute for Pure and Applied Mathematics for being generous hosts and for providing a great environment during the period when this work was completed.

% \vspace{0.5em}

% \textbf{Author contributions}

% \vspace{0.5em}

% \textbf{Competing interests} The authors declare no competing interests.

\bibliographystyle{unsrt}
\bibliography{refs}

\appendix

\section{Lower bounds on the Clifford gate complexity for diagonalizing commuting Paulis} \label{sec:lower-bound-diagonalization-gate-complexity}

The goal of this appendix is to establish Theorem~\ref{thm:lower-bound}, repeated here for convenience:

\lbound*

The proof uses an elementary counting argument based on some known facts about the Pauli group on $n$-qubits, and vector spaces over $\mathbb{F}_2$, which we first state below. Recall the definition of the Clifford group $\mathcal{C}_n$ on $n$-qubits:
\begin{equation}
    \mathcal{C}_n := \{A \in U(2^n) : APA^{\dagger} \in \mathcal{P}_n, \; \forall P \in \mathcal{P}_n\},
\end{equation}
where $U(2^n)$ and $\mathcal{P}_n$ are the $2^n$-dimensional unitary group and the $n$-qubit Pauli group respectively. The first result we need is an exact count of the number of $n$-qubit Paulis that are diagonalized by an element of the Clifford group, stated below.
\begin{lemma}
\label{lem:helper1}
If $U \in \mathcal{C}_n$, then all the unique $n$-qubit Pauli matrices (upto phase factors) that are diagonalized by $U$ upon conjugation form a commuting subgroup of $\mathcal{P}_n$ of size $2^n$.      
\end{lemma}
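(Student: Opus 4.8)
The plan is to exploit the fact that conjugation by a Clifford unitary is an automorphism of the Pauli group, so that the set of Paulis diagonalized by $U$ is simply the preimage of the diagonal Pauli subgroup under this automorphism.

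First I would fix notation: let $\mathcal{D}_n := \{\omega\, Z_1^{a_1}\otimes\cdots\otimes Z_n^{a_n} : \omega \in \{\pm 1, \pm i\},\ a \in \{0,1\}^n\}$ denote the set of diagonal $n$-qubit Paulis. This is an abelian subgroup of $\mathcal{P}_n$ (a product of diagonal matrices is diagonal, and the phases $\{\pm 1,\pm i\}$ keep it closed), and modulo phase factors it has exactly $2^n$ elements, one for each $a \in \{0,1\}^n$; indeed it is a commuting subgroup of the largest possible size by \cite[Theorem~1]{sarkar2021sets}. Next I would observe that the map $\phi_U : \mathcal{P}_n \to \mathcal{P}_n$, $\phi_U(P) = UPU^{\dagger}$, is a group automorphism — this is exactly the defining property of $U \in \mathcal{C}_n$ together with the invertibility of conjugation — and that it fixes the central scalars $\{\pm 1, \pm i\}$, so it descends to an automorphism of $\mathcal{P}_n$ modulo phases.

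The key step is then the identification: a Pauli $P$ is diagonalized by $U$ upon conjugation precisely when $UPU^{\dagger} \in \mathcal{D}_n$, i.e. precisely when $P \in \phi_U^{-1}(\mathcal{D}_n)$. Since $\phi_U$ is an automorphism, $\phi_U^{-1}(\mathcal{D}_n)$ is a subgroup of $\mathcal{P}_n$ of the same size as $\mathcal{D}_n$, hence of size $2^n$ modulo phase factors. Commutativity is preserved as well: if $P, Q \in \phi_U^{-1}(\mathcal{D}_n)$, then $\phi_U(P)$ and $\phi_U(Q)$ commute (both lie in the abelian group $\mathcal{D}_n$), so $\phi_U(PQ) = \phi_U(P)\phi_U(Q) = \phi_U(Q)\phi_U(P) = \phi_U(QP)$, and injectivity of $\phi_U$ gives $PQ = QP$. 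This shows $\phi_U^{-1}(\mathcal{D}_n)$ is the desired commuting subgroup of size $2^n$, completing the proof.

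I do not expect a genuine obstacle here; the only care needed is bookkeeping with phase factors — making precise the passage between $\mathcal{P}_n$ and $\mathcal{P}_n$ modulo phases, and checking that ``diagonal Pauli matrix'' is closed under $\{\pm 1,\pm i\}$ so that $\mathcal{D}_n$ is genuinely a group. Alternatively, one can bypass quotients entirely via the symplectic representation of $\mathcal{P}_n$ over $\mathbb{F}_2$, in which conjugation by $U$ acts as an invertible symplectic linear map and $\mathcal{D}_n$ corresponds to a Lagrangian subspace; the statement then reduces to the fact that the preimage of a Lagrangian under a symplectic transformation is again a Lagrangian of dimension $n$. I would present the group-automorphism argument as the main line and mention the symplectic viewpoint only as a remark, since the appendix already works with $\mathbb{F}_2$ vector spaces.
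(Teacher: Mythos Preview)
Your proposal is correct. Both arguments rest on the same underlying fact --- that conjugation by a Clifford is a phase-preserving automorphism of $\mathcal{P}_n$ --- but you package it differently from the paper. The paper works concretely: it sets $Q_j := U^{\dagger} Z_j U$, defines $\mathcal{Q} = \langle Q_1,\dots,Q_n\rangle$, checks $|\mathcal{Q}| = 2^n$ by independence of the generators, and then runs a separate contradiction argument (expanding an arbitrary Pauli in the conjugated generators $P_j = U^\dagger X_j U$, $Q_j$ and observing that any $P_k$ factor forces an $X_k$ in $UPU^\dagger$) to rule out anything outside $\mathcal{Q}$. Your route identifies the target set directly as $\phi_U^{-1}(\mathcal{D}_n)$ and reads off both the size and the commutativity from the automorphism property, which collapses the paper's two-step ``existence then uniqueness'' structure into one line. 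The trade-off is that the paper's version hands you explicit generators $Q_j$ for the subgroup, which could be useful downstream; your version is cleaner but less constructive. The only point you leave slightly implicit is that $\mathcal{D}_n$ really is \emph{all} diagonal elements of $\mathcal{P}_n$ (i.e.\ a tensor product of single-qubit Paulis is diagonal iff no factor is $X$ or $Y$); this is trivial but worth one sentence since it is what makes ``$UPU^\dagger$ diagonal'' equivalent to ``$\phi_U(P) \in \mathcal{D}_n$''.
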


\begin{proof}
Consider the set of $n$-qubit Pauli matrices $\mathcal{A} := \{X_1,X_2,\dots,X_n, Z_1,Z_2,\dots,Z_n\}$, which together with $iI$ generate $\mathcal{P}_n$. Here the notation $X_j$ (resp. $Z_j$) means that we have the Pauli $X$ (resp. $Z$) on the $j^{\text{th}}$ qubit, and identities on all the other qubits. It is easily seen that all the elements of $\mathcal{A}$ are independent as generators of $\mathcal{P}_n$. Now for $j = 1,\dots,n$, we define
\begin{equation}
    P_j := U^{\dagger} X_j U, \;\; Q_j := U^{\dagger} Z_j U.
\end{equation}
Since each element of the Clifford group $\mathcal{C}_n$ is a commutation preserving group isomorphism of $\mathcal{P}_n$, we conclude that the set $\mathcal{B}:= \{P_1,P_2,\dots,P_n, Q_1,Q_2,\dots,Q_n\}$ is also independent, and together with $iI$ generates $\mathcal{P}_n$.

Now define the group $\mathcal{Q} := \langle Q_1, Q_2,\dots,Q_n \rangle$. Since for each $j$, we have $U Q_j U^{\dagger} = Z_j$, we conclude that for every element $Q \in \mathcal{Q}$, the matrix $U Q U^{\dagger}$ is diagonal. We may also note that $\mathcal{Q}$ has size $2^n$, as it is generated by $n$ independent elements of $\mathcal{P}_n$. Moreover, $\mathcal{Q}$ is a commuting subgroup because all the generators $Q_1,\dots,Q_n$ of $\mathcal{Q}$ commute (this is true since $Z_1,\dots,Z_n$ commute). It thus remains to show that modulo phase factors, there is no other Pauli $P \not \in \mathcal{Q}$, such that $U P U^{\dagger}$ is diagonal.

To see this, suppose for contradiction that there exists such a Pauli $P$, such that $U P U^{\dagger}$ is diagonal. Then, as $P \not \in \mathcal{Q}$, we must have
\begin{equation}
P = \gamma \left( \prod_{j \in J}  Q_j  \right) \left( \prod_{k \in K}  P_k \right), \;\; \gamma \in \{\pm I, \pm iI\},
\end{equation}
where $J, K \subseteq [n] = \{1,\dots,n\}$, and $K$ is non-empty. The above equation implies
\begin{equation}
\begin{split}
U P U^{\dagger} &= \gamma \left( \prod_{j \in J}  U Q_j U^{\dagger} \right) \left( \prod_{k \in K}  U P_k U^{\dagger} \right) \\
&= \gamma \left( \prod_{j \in J}  Z_j \right) \left( \prod_{k \in K}  X_k \right),
\end{split}
\end{equation}
which is clearly a contradiction since both $UPU^{\dagger}$ and $\gamma \prod_{j \in J}  Z_j$ are diagonal matrices, while $\prod_{k \in K}  X_k$ is not diagonal. This concludes the proof.
\end{proof}

The next result that we need is an exact count of the number of distinct sets of size $r \leq n$, consisting of $r$ independent commuting elements of $\mathcal{P}_n$. We state this in the next lemma, and the proof can be found in \cite[Lemma~11]{sarkar2021sets}.

\begin{lemma}
\label{lem:helper2}
Let $n \geq 1$ be the number of qubits. Then the number of distinct sets of $r$ commuting elements of the Paul group $\mathcal{P}_n$, where no Paulis in the set are the same modulo phase factors, and where all the $r$ Paulis are independent, is given by
\begin{equation}
\label{eq:helper2}
    \frac{1}{r !} \prod_{k=0}^{r-1} \left( 4^n/2^k - 2^k \right).
\end{equation}
\end{lemma}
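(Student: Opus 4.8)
The plan is to pass to the standard symplectic representation of the Pauli group over $\mathbb{F}_2$ and reduce the claim to a purely linear-algebraic count of isotropic subspaces equipped with ordered bases. First I would set up the dictionary: the quotient $\mathcal{P}_n/\langle iI\rangle$ of the Pauli group by its center (i.e.\ Paulis modulo phase factors) is isomorphic as an abelian group to $\mathbb{F}_2^{2n}$, where a Pauli $P$ is sent to the vector $\bar P = (\vec a \,|\, \vec b) \in \mathbb{F}_2^n \times \mathbb{F}_2^n$ recording its $X$- and $Z$-exponents qubit by qubit. Under this map, $\bar P = \bar Q$ exactly when $P$ and $Q$ agree up to a phase; $P$ and $Q$ commute exactly when $\omega(\bar P,\bar Q) = 0$ for the standard nondegenerate alternating form $\omega\big((\vec a\,|\,\vec b),(\vec a'\,|\,\vec b')\big) = \vec a \cdot \vec b' + \vec a' \cdot \vec b$; and a family of Paulis is independent (no nontrivial product of a subfamily is a scalar multiple of $I$) exactly when their images are linearly independent over $\mathbb{F}_2$. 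Thus the sets being counted correspond bijectively to unordered $r$-element subsets $\{v_1,\dots,v_r\} \subseteq \mathbb{F}_2^{2n}$ that are linearly independent and pairwise $\omega$-orthogonal; linear independence already subsumes the requirement that the strings be pairwise distinct modulo phase.

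Because $\omega$ is alternating, $\omega(v,v) = 0$ automatically, so any such set spans a totally isotropic subspace $V$ with $\dim V = r$, and conversely any totally isotropic $V$ of dimension $r$ together with a choice of ordered basis yields one of our tuples. I would therefore count ordered tuples and divide by $r!$ at the end. For the ordered count I build the tuple greedily: suppose $v_1,\dots,v_k$ have been chosen and span a totally isotropic $V_k$ of dimension $k$, and count admissible $v_{k+1}$. Admissibility means (i) $v_{k+1} \in V_k^{\perp}$, so that $v_{k+1}$ commutes with all the previous vectors, and (ii) $v_{k+1} \notin V_k$, so that independence is preserved; any such choice automatically keeps the span totally isotropic, once more because $\omega$ is alternating and $v_{k+1} \in V_k^{\perp}$. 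Nondegeneracy of $\omega$ gives $\dim V_k^{\perp} = 2n - k$, and isotropy of $V_k$ gives $V_k \subseteq V_k^{\perp}$, so the number of admissible vectors is $|V_k^{\perp}| - |V_k| = 2^{2n-k} - 2^k = 4^n/2^k - 2^k$, and this is independent of the particular choice of $v_1,\dots,v_k$. Multiplying over $k = 0,1,\dots,r-1$ (the $k=0$ factor being $4^n - 1$, the number of nonzero vectors) gives $\prod_{k=0}^{r-1}\big(4^n/2^k - 2^k\big)$ ordered tuples, and dividing by $r!$ gives the stated formula.

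The step needing the most care is the dictionary in the first paragraph — specifically, checking that the notion of independence used in the statement coincides with $\mathbb{F}_2$-linear independence of the images, so that phase ambiguities (a subfamily product equal to $-I$ or $\pm iI$ rather than exactly $I$) are handled correctly, and that the correspondence between commuting independent Paulis distinct modulo phase and isotropic-spanning independent vectors is exact. After that, the recursive count of admissible extensions and the passage from ordered tuples to unordered sets are routine.
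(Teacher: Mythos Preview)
Your proposal is correct. The paper does not actually prove this lemma; it simply cites \cite[Lemma~11]{sarkar2021sets} for the proof. Your argument via the symplectic $\mathbb{F}_2^{2n}$ model of $\mathcal{P}_n/\langle iI\rangle$ and the greedy count of isotropic extensions is the standard route to this formula and is presumably what the cited reference does as well.
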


Finally, we recall an well-known result from linear algebra which exactly counts the number of ways to choose $r$ linearly independent vectors from a $t$-dimensional vector space over $\mathbb{F}_2$, with $t \geq r$:

\begin{lemma}
\label{lem:helper3}
Let $t \in \mathbb{N}$. The number of distinct ways to choose $r \leq t$ linearly independent vectors from the vector space $\mathbb{F}_2^t$ is given by
\begin{equation}
    \frac{1}{r!} \prod_{k=0}^{r-1} \left( 2^t - 2^k \right).
\end{equation}
\end{lemma}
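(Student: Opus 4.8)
The plan is to count \emph{ordered} tuples of $r$ linearly independent vectors first, and then pass to unordered sets by dividing by $r!$. I would build the tuple $(v_1,\dots,v_r)$ greedily: having already chosen linearly independent $v_1,\dots,v_k$, these span a $k$-dimensional subspace of $\mathbb{F}_2^t$ containing exactly $2^k$ vectors, so the next vector $v_{k+1}$ is linearly independent from the previous ones if and only if it lies outside this span, giving $2^t - 2^k$ admissible choices. Starting from $k = 0$ (where the only forbidden choice is the zero vector, and $2^0 = 1$ of them), this shows that the number of ordered tuples of $r$ linearly independent vectors in $\mathbb{F}_2^t$ equals $\prod_{k=0}^{r-1} (2^t - 2^k)$. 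Note that for $k \le r - 1 \le t - 1$ we have $2^k < 2^t$, so each factor is positive, consistent with $r \le t$.

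Next I would observe that every set $\{v_1,\dots,v_r\}$ of $r$ linearly independent vectors consists of $r$ \emph{distinct} nonzero vectors (distinctness is forced, since a repeated vector would itself be a linear dependence over $\mathbb{F}_2$), and hence arises from exactly $r!$ distinct ordered tuples, one for each permutation of its elements. Conversely, every ordered tuple of linearly independent vectors determines such a set by forgetting the order. Therefore the map from ordered tuples to sets is exactly $r!$-to-one, and dividing yields the claimed count $\tfrac{1}{r!}\prod_{k=0}^{r-1}(2^t - 2^k)$.

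I do not anticipate a genuine obstacle here; the only step requiring a moment's care is the well-definedness of the division by $r!$, namely verifying that the $r!$ orderings of a linearly independent set are genuinely distinct tuples (so that no set is counted fewer than $r!$ times), which is exactly the distinctness of elements noted above. I would also remark that this lemma is just the $\mathbb{F}_2$ specialization of the standard formula for the number of ordered bases of a subspace (equivalently, the order of $\mathrm{GL}_r(\mathbb{F}_2)$ reweighted by the ambient dimension), and that it plays the same bookkeeping role in the proof of Theorem~\ref{thm:lower-bound} as Lemma~\ref{lem:helper2} does for counting commuting Pauli sets.
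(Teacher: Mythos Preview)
Your argument is correct and is the standard derivation: count ordered $r$-tuples of linearly independent vectors by the greedy span-complement count $\prod_{k=0}^{r-1}(2^t-2^k)$, then divide by $r!$ since linearly independent vectors are distinct. The paper itself does not supply a proof of this lemma, merely citing it as a well-known fact from linear algebra, so there is nothing to compare against; your write-up would serve perfectly well as the omitted justification.
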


We now return to the proof of Theorem~\ref{thm:lower-bound}.
\begin{proof}[Proof of Theorem~\ref{thm:lower-bound}]
We first note that we have an upper bound
\begin{equation}
|\mathcal{U}| \leq (n^2 + n + 1)^d, 
\end{equation}
since every gate can be chosen from the gate set in $n(n-1) + 2n + 1$ ways. First let us fix any such Clifford unitary $U \in \mathcal{U}$. We know by Lemma~\ref{lem:helper1} that modulo phase factors, all the Paulis diagonalized upon conjugation by $U$ belongs to a commuting subgroup of size $2^n$. Let us call this subgroup $\mathcal{S}$. As a symplectic vector space, $\mathcal{S}$ is a $\mathbb{F}_2$-vector space of dimension $n$. Thus, the number of ways of choosing $r \leq n$ independent elements of $\mathcal{S}$, is the same as choosing $r$ linearly independent elements of $\mathbb{F}_2^n$, and by Lemma~\ref{lem:helper3} this count is given by 
\begin{equation}
    \frac{1}{r!} \prod_{k=0}^{r-1} \left( 2^n - 2^k \right).
\end{equation}
This count above is exactly the number of sets of size $r$ consisting of independent commuting Paulis on $n$ qubits (modulo phase factors), such that all the Paulis in the set are simultaneously diagonalized by $U$ upon conjugation. 

Now by union bound, we may obtain the following upper bound on the number of such sets of size $r$, all of whose elements are simultaneously diagonalized upon conjugation by any Clifford unitary in $\mathcal{U}$:
\begin{equation}
\label{eq:app-lower-bound-1}
    (n^2 + n + 1)^d \;\; \frac{1}{r!} \prod_{k=0}^{r-1} \left( 2^n - 2^k \right).
\end{equation}

On the other hand, we also have from Lemma~\ref{lem:helper2} that the total number of all commuting independent sets of Paulis of size $r$ (modulo phase factors) is given by the expression in Eq.~\eqref{eq:helper2}. Thus to have a set of $r$ independent Paulis that are not simultaneously diagonalized upon conjugation by any Clifford unitary in $\mathcal{U}$, it is a sufficient condition to require that the expression in Eq.~\eqref{eq:app-lower-bound-1} is strictly less than the expression in Eq.~\eqref{eq:helper2}; which after simplification yields the equivalent condition

\begin{equation}
\begin{split}
    (n^2 + n + 1)^d &<  \frac{\prod_{k=0}^{r-1} \left( 4^n/2^k - 2^k \right)}{\prod_{k=0}^{r-1} \left( 2^n - 2^k \right) } \\
    &= \prod_{k=0}^{r-1} \left( 1 + 2^{n-k} \right).
\end{split}
\end{equation}
Taking log base $2$ on both sides of the above inequality proves the theorem.
\end{proof}

\section{Sorted insertion vs. random insertion} \label{sec:sorted-insertion-vs-random-insertion}

The sorted insertion algorithm~\cite{Crawford_Straaten_Wang_Parks_Campbell_Brierley_2021} works by first sorting the Pauli strings by the absolute values of their coefficients, then sorting them into mutually commuting groups. Alternatively, the strings could be put into a random order instead of sorting them. The results of this algorithm compared to sorted insertion are shown in Fig.~\ref{fig:c2_groups_r}. Here, the average number of groups for the randomized algorithm is nearly identical to the number of groups for sorted insertion, with small variance. However, the values of \(\hat{R}\) have large variance, and the average value of \(\hat{R}\) is much lower than sorted insertion. Despite having about the same number of groups, randomization requires many more shots in order to evaluate the same expectation value as sorted insertion. 

\begin{figure*}
    \centering
    \includegraphics[width=0.45\textwidth]{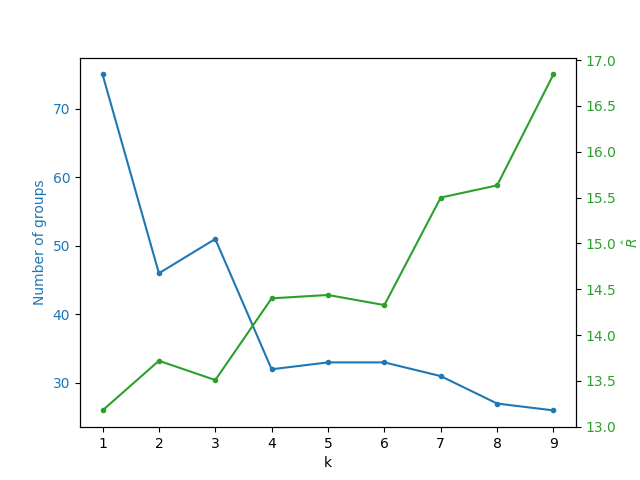}
    \includegraphics[width=0.45\textwidth]{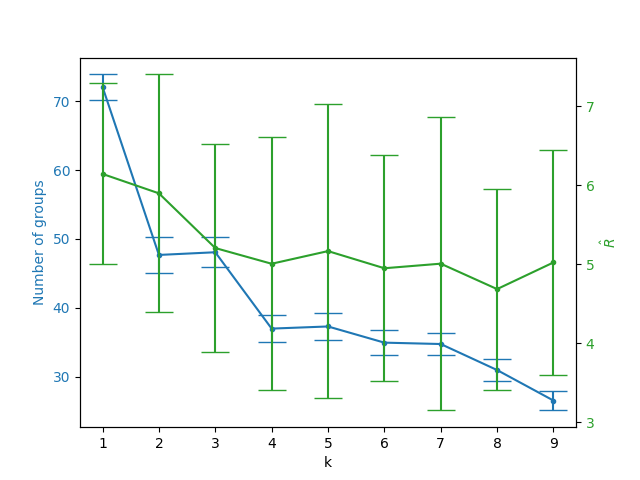}
    \caption{The number of groups and $\hat{R}$ metrics obtained for sorted insertion (left panel) and random insertion (right panel) on the $n = 10$ qubit \(C_2\) molecule~\cite{hamlib}. For random insertion, error bars show the standard deviation over $100$ trials.  Note that the number of groups is comparable for both algorithms but the values for $\hat{R}$ are substantially different. In particular, $\hat{R}$ is significantly lower for the random insertion algorithm. This, as well as the relatively large variance in $\hat{R}$ for the random insertion algorithm, provides another example showing that $\hat{R}$ is a more appropriate metric than the number of groups when considering measurement reduction.}
    \label{fig:c2_groups_r}
\end{figure*}

\section{Numerics for the Bose-Hubbard model} \label{sec:bose-hubbard-numerics}

Here, we consider the Bose-Hubbard Hamiltonian~\eqref{eqn:bose-hubbard-hamiltonian} for the case where there can be more than one boson per site. The Bose-Hubbard models in HamLib~\cite{hamlib} have examples of both unary and binary mappings between Fock states of the bosons and computational basis states of the qubits. W numerically compute $k^*$ for several one-dimensional Bose-Hubbard models from HamLib. Table~\ref{tab:bose-hubbard-unary} shows results for the unary encoding, and Table \ref{tab:bose-hubbard-binary} shows results for the binary encoding. Our numerical experiments show that the unary encoding gives a value of $k^* = 4$ for all lattice sizes we investigated, while in general for the binary encoding we see that $k^* = n$. Similarly, the unary case shows a greater ratio of the maximal $\hat{R}$ value to the minimal $\hat{R}$ value. The unary encoding is thus more efficient in terms of the number of shots needed for measuring the Hamiltonian. Note, however, that the number of qubits needed to encode the Hamiltonian is much larger, $O(r)$ for the unary encoding, and $\log(r)$ for the binary encoding, where $r$ is the maximal number of bosons at each site. A detailed view of the algorithm's behavior is given by the left sub-plot in Fig.~\ref{fig:bose-hubbard-unary-plot}, which shows $\hat{R}$ and the number of groups as a function of $k$ for a unary-encoded Bose-Hubbard model. A similar plot is shown for a two-dimensional case of the same Hamiltonian in the right sub-plot of Fig.~\ref{fig:bose-hubbard-unary-plot}. Numerics for this case show that the maximal value of $\hat{R}$ is achieved when $k=4$. However, the the ratio of the maximal and minimal $\hat{R}$ values is only 1.05. The performance of our algorithm for the one-dimensional case seems much better than in two dimensions.

\begin{figure*}
    \centering
    \includegraphics[width=\columnwidth]{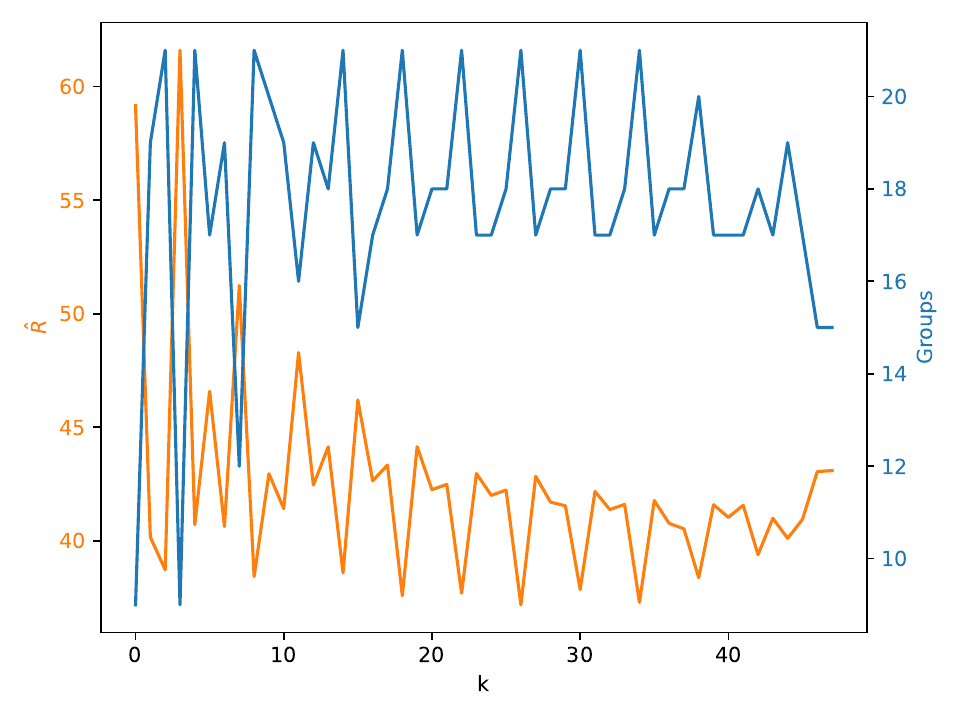}
    \includegraphics[width=\columnwidth]{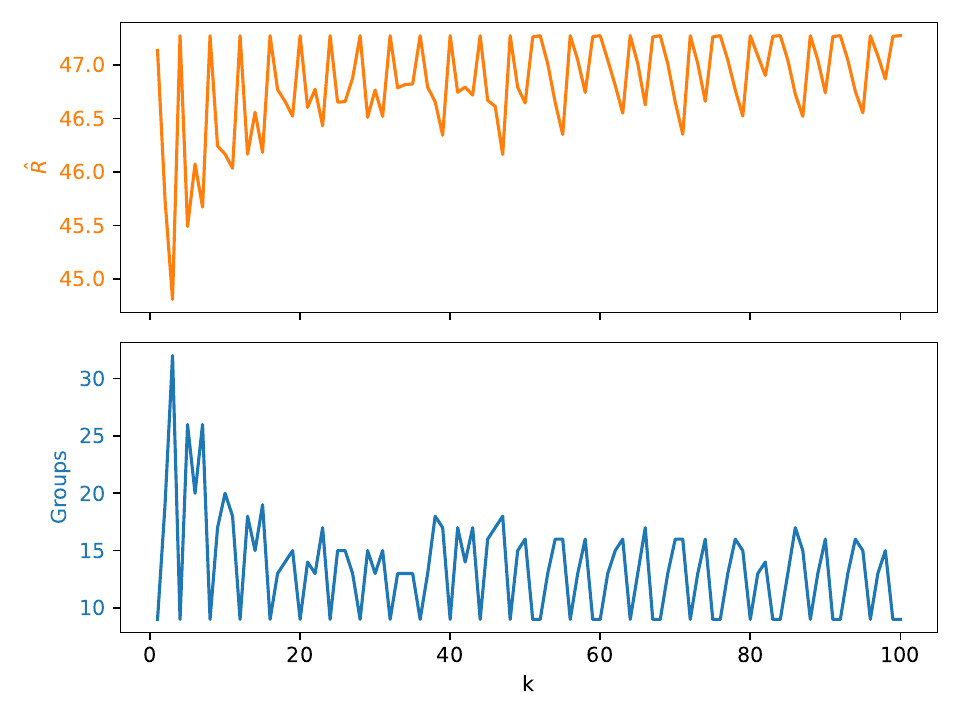}
    \caption{Left: $\hat{R}$ (orange) and number of groups (blue) vs. $k$ for a unary-encoded, one-dimensional Bose-Hubbard model of lattice size $l=12$, with at most 4 bosons per site. Right: $\hat{R}$ (orange) and number of groups (blue) vs. $k$ for a unary-encoded, two-dimensional Bose-Hubbard model of lattice size $5 \times 5$, with at most 4 bosons per site.}
    \label{fig:bose-hubbard-unary-plot}
\end{figure*}

% \begin{figure}
%     \centering
%     \includegraphics[width=\columnwidth]{bose_hubbard_2d.pdf}
%     \caption{$\hat{R}$ (orange) and number of groups (blue) vs. $k$ for a unary-encoded, two-dimensional Bose-Hubbard model of lattice size $5 \times 5$, with at most 4 bosons per site.}
%     \label{fig:bose-hubbard-2d-plot}
% \end{figure}

\begin{table} 
    \begin{tabular}{c|c|c|c}
         $l$ & $n$ & $k^*$ & $\hat{R}_{\text{max}} / \hat{R}_{\text{min}}$\\
        \hline
         4 & 16 & 4 & 1.65 \\
         10 & 40 & 4 & 1.33 \\
         12 & 48 & 4 & 1.66 \\
         16 & 64 & 4 & 1.15 \\
         22 & 88 & 4 & 1.07 \\
    \end{tabular}
    \caption{Lattice dimension $l$, number of qubits $n$, the value $k^*$ that maximizes $\hat{R}$, and the ratio of the maximal to minimal $\hat{R}$ values for the one-dimensional Bose-Hubbard model in the unary encoding.}
    \label{tab:bose-hubbard-unary}
\end{table}

\begin{table} 
    \begin{tabular}{c|c|c|c}
         $l$ & $n$ & $k^*$ & $\hat{R}_{\text{max}} / \hat{R}_{\text{min}}$ \\
        \hline
         4 & 8 & 2 & 1.21 \\
         10 & 20 & 20 & 1.33 \\
         12 & 24 & 24 & 1.32 \\
         16 & 32 & 32 & 1.14 \\
         22 & 44 & 44 & 1.05 \\
    \end{tabular}
    \caption{Lattice dimension $l$, number of qubits $n$, the value $k^*$ that maximizes $\hat{R}$, and the ratio of the maximal to minimal values of $\hat{R}$ for the one-dimensional Bose-Hubbard model in the binary encoding.}
   \label{tab:bose-hubbard-binary} 
\end{table}

\end{document}